\title{Turchin's Relation for Call-by-Name Computations:\\ A Formal Approach}
\author{Antonina Nepeivoda
\institute{Program Systems Institute of Russian Academy of Sciences\thanks{The reported study was partially supported by RFBR, research project No. 14-07-00133, and Russian Academy of Sciences, research project No. AAAA-A16-116021760039-0.}\\
  Pereslavl-Zalessky, Russia\\}
\email{a\_nevod@mail.ru}
}
\newcommand*{\ovl}[1]{\ensuremath{\overline{\raisebox{0pt}[1.3\height]{#1}}}}
\newtheorem{Theorem}{Theorem}
\newtheorem{Prop}{Proposition}
\newtheorem{Definition}{Definition}
\newtheorem{Example}{Example}
\def\logand{\mathrel{\&}}
\def\logimpl{\mathrel{\Rightarrow}}
\def\child{\mathop{\textrm{child}}}
\def\nil{\mathop{\textrm{[Stop]}}}
\def\pop{\mathop{\textrm{[Pop]}}}
\def\App{\mathop{\textrm{App}}}
\def\Copy{\mathop{\textrm{Copy}}}
\def\Lift{\mathop{\textrm{Ins}}}
\def\Del{\mathop{\textrm{Del}}}
\def\deriv{\mathop{\textrm{deriv}}}
\def\inv{\mathop{\textrm{inv}}}
\def\lw{\mathop{\mathsf{LW}(\Upsilon,\mathbf{S})}}
\begin{document}
\maketitle

\begin{abstract}
Supercompilation is a program transformation technique that was first described by V.\,F.~Turchin in the 1970s. In supercompilation, Turchin's relation as a similarity relation on call-stack configurations is used both for call-by-value and call-by-name semantics to terminate unfolding of the program being transformed. In this paper, we give a formal grammar model of call-by-name stack behaviour. We classify the model in terms of the Chomsky hierarchy and then formally prove that Turchin's relation can terminate all computations generated by the model. 
\end{abstract}

%\keywords{Turchin's relation, supercompilation, call-by-name, lazy semantics, context sensitive grammars}

\section{Introduction}

Supercompilation is a program transformation method based on fold/unfold operations \cite{Tur86,Jones,Hamilton}. Given a program and its parameterized input configuration, a supercompiler partially unfolds the computation tree of the program on the input configuration and then tries to fold the tree back into a graph, which presents the residual program. In the general case, the computation tree may be infinite. Thus, the following question appears: when is it reasonable to stop the unfolding in order to avoid going into an infinite loop?

One of the ways to solve this problem is based on ``configuration similarity'' relations. If a path in the tree contains two configurations, the latter of which resembles the former, that may be a sign that the path represents an unfolded loop. Thus, when a supercompiler finds two such configurations, it terminates unfolding of the path where they appear. In general, the loop recognizing problem is undecidable. Hence, we must make a choice: either to take a risk of an infinite unfolding trying to find all the finite paths, or to take a risk of terminating the finite paths too early guaranteeing termination of all the infinite paths. In most supercompilers, the second option is preferred\footnote{In supercompiler SCP4 \cite{NemytykhBook}, we can choose one of the two options.} \cite{Leuschel,Gluck95,Hamilton}.

Now we recall an important relation property used for termination. 

\begin{Definition}\label{definition:awo}
Given a set $T$ of terms and a set $S$ of sequences of the terms from $T$, relation \(R\subset T\times T\) is called \emph{a well binary relation with respect to set $S$}, if every sequence \(\{\Phi_n\}\in S\) such that \(\forall i,j (i<j\logimpl (\Phi_i, \Phi_j)\notin R)\) is finite \cite{Leuschel}. 

%A sequence \(\{\Phi_n\}\) satisfying the property \(\forall i,j (i<j\logimpl (\Phi_i, \Phi_j)\notin R)\) is called \emph{a bad %sequence} with respect to \(R\). 
\end{Definition}

So, a well binary relation is ``a well quasi-order without the order'' (i.\,e., it is not necessarily transitive).

Any relation guaranteeing termination of the unfolding of a computation tree must be a well binary relation with respect to the set of the traces generated in the tree. The relation most widely used for this aim, the homeomorphic embedding \cite{Leuschel,Albert,Gluck95}, is well binary with respect to arbitrary term sequences \cite{Kruskal}. Some other relations used for termination in program transformations\footnote{Among them is the relation used in supercompiler SCP4 \cite{NemytykhBook} and the relation used in higher-order supercompiler HOSC \cite{Kluch}.} are not well binary with respect to arbitrary term sequences. However, they are well binary with respect to term sequences that can be generated on any computation path\footnote{This property of not being well binary for arbitrary sequences makes it harder to prove the well-binariness of the relations, because the ``minimal bad sequence'' reasoning, which is used, e.g., in the classical proof of Kruskal's theorem in \cite{NW65}, does not work.}. 

This paper studies Turchin's relation, well-binariness of which also can be proved only with respect to computation paths that appear during unfolding. That relation on call-stack configurations was the first well binary relation used for trace termination \cite{Tur86} (1986). Although Turchin's relation is a useful tool that helps to solve both termination and generalization problems \cite{Turchin88} (also see Section~\ref{Section:Preliminaries} of this paper), the proof of its well-binariness given by V.~Turchin in \cite{Turchin88} was presented in a semi-formal way. For the call-by-value semantics, the formal proof of this property of Turchin's relation is given in \cite{ANNTur}. The formalization is based on the prefix rewriting grammars model. But as far as we know, for the call-by-name semantics, the relation was never formally studied. This paper tries to cover this gap.  

Our contributions are the following:

\begin{enumerate}
\item We introduce a notion of a multi-layer prefix grammar. Elements of traces generated by such a grammar are call-stack configurations on computation paths in the call-by-name semantics. We show that the class of grammars is stronger than the class of context-free grammars. 
\item We prove a strengthened version of Turchin's theorem on well-binariness of Turchin's relation. Namely, we prove that every infinite computation path modelled by a trace of a multi-layer prefix grammar contains an infinite chain w.r.t. Turchin's relation. As a consequence, one can use Turchin's relation in composition with the homeomorphic embedding relation (or any other relation, which is well-binary on the arbitrary sequences of terms) without the loss of well-binariness. \footnote{The idea behind the composition is to make Turchin's relation responsible for approximating call-stack behavior of the program, while the second relation defined on the configurations in the infinite Turchin's chain has to take into account static properties of the terms in the infinite chain above. Such a composition may allow a supercompiler to construct more accurate generalizations (see Example~\ref{Example:log2progcomposite}).}.\end{enumerate}

The paper is organized as follows. In Section~\ref{Section:Preliminaries}, we informally introduce Turchin's relation for call-stack configurations. In Section~\ref{section:MLPG} we define a class of grammars that model call stack behaviour for call-by-name computations and describe its computational power. In Section~\ref{section:MCSBMLG} we show how such grammars can be used for modelling the call stack behaviour of programs in a simple functional language. Finally, in Section~\ref{section:TurRelMLPG} we refine the~definition of Turchin's relation for the new class of the grammars.

\subsection{Presentation Language}
\label{Subsection:PresLang}

In this subsection, we informally describe the syntax and semantics of the simple functional language $\mathbb{L}$ used below for demonstrating the modelling method.

The language $\mathbb{L}$ is based on the call-by-name semantics. The names of the variables in $\mathbb{L}$ are the~words starting with the~letter ${x}$. Let $\mathcal{E}(\mathbb{L})$ denote the set of expressions in $\mathbb{L}$, 
then:

\begin{enumerate}
\item if $t$ is a variable or a constant (a null-ary constructor), then $t\in \mathcal{E}(\mathbb{L})$,

\item if $t_i\in \mathcal{E}(\mathbb{L})$ ($1\leq i\leq n$), and $C$ is a constructor of the arity $n$, then $C(t_1,\dots,t_n)\in \mathcal{E}(\mathbb{L})$,

\item if $t_i\in \mathcal{E}(\mathbb{L})$ ($1\leq i\leq n$), and $f$ is a function name of the arity $n$, then $f(t_1,\dots,t_n)\in \mathcal{E}(\mathbb{L})$.
\end{enumerate} 

For the sake of brevity, we use natural numbers $n\in\mathbb{N}$ for denoting terms of the form $S(S(\dots S(Z)\dots))$ (the unary Peano numbers), where the constant $Z$ stands for $0$. Thus, we denote the increment constructor $S(w)$ by $w+1$.

A~definition of a~function ${f(x_1,...,x_n)}$ in $\mathbb{L}$ is a~sequence of sentences of the~form 

$${f(T_1,...,T_n)=P};$$

Here ${T_i}$ is either a variable or expression $C(t_1,\dots, t_n)$, where $C$ is a constructor and $t_i$ are variables. The expression on the right-hand side of the definition, ${P}$, is an arbitrary expression in $\mathcal{E}(\mathbb{L})$ containing only the variables occurring in ${f(T_1,...,T_n)}$. For every left-hand side of the definition ${f(T_1,...,T_n)}$, no variable can appear in ${f(T_1,...,T_n)}$ more than once.

The $\mathbb{L}$-program sentences are rewriting rules. The rewriting rules in the programs are ordered from top to bottom and they should be matched in this order (as in the Haskell and Refal \cite{TurRefal5} languages).

A simple program in $\mathbb{L}$ is given in Example~\ref{Example:log2prog}.

\section{Turchin's Relation}
\label{Section:Preliminaries}

\subsection{Turchin's Relation for Call-Stack Configurations}

The relation we refer to was described in early 1970's by Valentin Turchin in his seminal works on supercompilation. Turchin's relation considers every call stack as a list of the function names, starting from the top. The idea is the following. If a loop occurs along a computation path, then some prefix of the list of function names must repeat itself. Then the common suffix (maybe empty) of the lists representing the function names points to computations after the loop. Turchin's relation ignores arguments and considers only call-stack configurations before successful attempts to evaluate a function call. Thus, the actions of the call stack restructuring are ignored by Turchin's relation.

\begin{Example} 
\label{Example:log2prog}

The following program defines function $f(x)=[\log_2(x)]+1$ for natural $x\geq 1$. A part of its computation tree starting from the input point $f(h(x))$ is shown in Figure~\ref{fig:1treelog2prog}. In the tree, only stack configurations \emph{before evaluating the program rules} are present. For example, the first call-stack configuration ${out:=f(h(x))}$ is omitted, because the attempt to compute the call $f$ with the argument $h(x)$ does not lead to any substitutions. It causes only the restructuring of the stack.

\footnotesize {$$\begin{array}{|l|l|}\hline \textrm{A program computing }[\log_2(x)]+1\\\hline
{f(0)=0;}\\ {f(x+1)=f(g(x+1))+1;}\\\\
{g(0)=0;}\\ {g(x+1)=h(x);}\\\\
{h(0)=0;}\\ {h(x+1)=g(x)+1;}\\\hline
\end{array}$$}

\begin{figure}[t]
%\begin{frameit}
\footnotesize {\[
\UseAllTwocells
%\entrymodifiers={++[F-:<5pt>]}
\xymatrix @-4mm {
&&&*+[F-:<5pt>]{\begin{array}{ll} \textrm{Term: }&f(h(x))\\\textrm{Stack: }&{z_0:=h(x)}\\&{out:=f(z_0)}\end{array}}\ar[dll]_<(.35)
{x=0}\ar[d]^<(.25){x=x_1+1}\\
&*++[F-:<5pt>]{{f(0)}}\ar[d] && *+[F-:<5pt>]{\begin{array}{ll} \textrm{Term: }&{f(g(x_1)+1)}\\\textrm{Stack: }&{out:=f(g(x_1)+1)}\end{array}}\ar[d]\\
&*++[F-:<5pt>]{{0}}  && *+[F-:<5pt>]{\begin{array}{ll} \textrm{Term: }&{f(g(g(x_1)+1))+1}\\\textrm{Stack: }&{z_1:= g(g(x_1)+1),}\\ &{out:=f(z_1)+1}\end{array}}\ar[d]\\
&*++[F-:<5pt>]{{f(h(0))+1}}\ar[dl] && *+[F-:<5pt>]{\begin{array}{ll} \textrm{Term: }&{f(h(g(x_1)))+1}\\\textrm{Stack: }&{z_2:= g(x_1),}\\&{z_1:= h(z_2),}\\ &{out:=f(z_1)+1}
\end{array}}\ar[ll]_<(.3){x_1=0}\ar[d]^{x_1=x_2+1}\\
*++[F-:<5pt>]{{f(0)+1}}\ar[d]&&& *+[F-:<5pt>]{\begin{array}{ll} \textrm{Term: }&{f(h(h(x_2)))+1}\\\textrm{Stack: }&{z_2:= h(x_2),}\\&{z_1:= h(z_2),}\\ &{out:=f(z_1)+1}
\end{array}}\ar[dll]_<(.35){x_2=0}\ar[d]^<(.2){x_2=x_3+1}\\
*++[F-:<5pt>]{{1}}&*++[F-:<5pt>]{{f(h(0))+1}} && *+[F-:<5pt>]{\begin{array}{ll} \textrm{Term: }&{f(h(g(x_3)+1))+1}\\\textrm{Stack: }&{z_1:= h(g(x_3)+1),}\\ &{out:=f(z_1)+1}
\end{array}}
}\] }
%\end{frameit}
\caption{A fragment of the computation tree for the program of Example~\ref{Example:log2prog}}
\label{fig:1treelog2prog}
\end{figure}
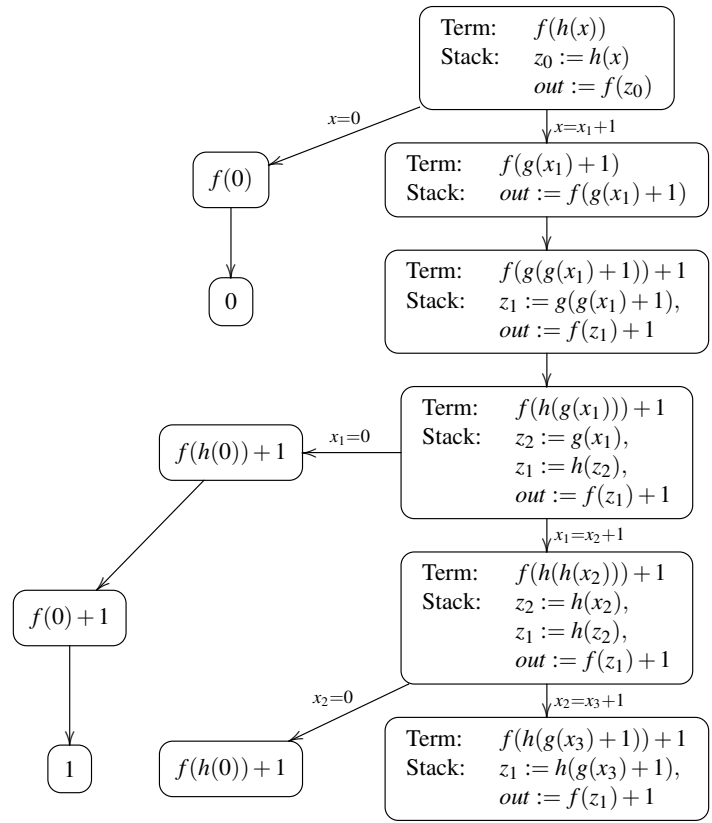

\noindent \normalsize The two configurations ${f(h(g(x_3)+1))+1}$ and ${f(h(x))}$ have a similar stack structure. Namely, both their stacks contain the two calls: the top is the call of function $h$ and the bottom is the call of function $f$. 
\end{Example}

Turchin's relation fits well for finding similarities in the call-stack configurations as the one in Example~\ref{Example:log2prog}. Namely, it checks whether the two call stacks $\Delta_1$ and $\Delta_2$ on the path can be split into parts $\mathit{[Top]}$, $\mathit{[Middle]}$, and $\mathit{[Context]}$ such that $\Delta_1=\mathit{[Top]}\mathit{[Context]}$, $\Delta_2=\mathit{[Top]}\mathit{[Middle]}\mathit{[Context]}$ and the part $\mathit{[Context]}$ is never changed on the path segment starting at $\Delta_1$ and ending at $\Delta_2$ (as shown in Figure~\ref{fig:turpic}). The passive part of data is ignored; only function names in the call stacks are considered. Thus, a call stack is treated by Turchin's relation as \emph{a word consisting of function names in the call stack} (maybe with some annotation). 

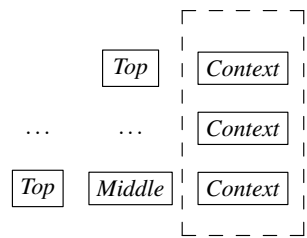
\begin{figure}[h]
\footnotesize
\UseAllTwocells
\UseTips
$$
\xymatrix @-5mm {& & & \\
& *+[F]{{\,\mathit{Top}\,}}& *+[F]{{\mathit{Context}}}
\\
\dots & \dots&*+[F]{{\mathit{Context}}}
\\ 
*+[F]{{\mathit{Top}}} & *+[F]{{\mathit{Middle}}};& *+[F]{{\mathit{Context}}}\save "1,3"."4,3"*++[F--]\frm{}\restore
\\
}$$
\caption{Turchin's relation for call-stack configurations}
\label{fig:turpic}
\end{figure}

Looking back at Example~\ref{Example:log2prog}, we can infer that the call-stack configurations for ${f(h(g(x_3)+1))+1}$ and ${f(h(x))}$ satisfy Turchin's relation. But the first two configurations that satisfy Turchin's relation in the tree are ${f(g(g(x_1)+1))+1}$ and ${f(h(g(x_1)))+1}$. The innermost call of the first configuration,  ${g(x_1)}$, does not appear in the call stack, because the call is in the passive part of the configuration. Hence, the call stacks of the two configurations are modelled by the words ${gf}$ and ${ghf}$. Moreover, the call $f$ is unchanged on the path segment starting at ${f(g(g(x_1)+1))+1}$ and ending at ${f(h(g(x_1)))+1}$. So we can assign $\mathit{[Top]}=g$, $\mathit{[Middle]}=h$, $\mathit{[Context]}=f$. 

It is worth noting that Turchin's relation additionally provides the following generalization strategy (a description of generalization can be found in, e.\,g., \cite{Gluck95}). If the two stacks are of the form $\mathit{[Top]}\mathit{[Context]}$ and $\mathit{[Top]}\mathit{[Middle]}\mathit{[Context]}$, then the former term can be decomposed to $\mathbf{let}\textrm{ }v_0=\mathit{[Top]}\textrm{ }\mathbf{in}\textrm{ }\mathit{[Context]}$. After the decomposition, the parts $\mathit{[Top]}$ and $\mathit{[Context]}$ are developed by supercompilation separately. Thus, the output format of the part $\mathit{[Top]}$ is not seen any more by the part $\mathit{[Context]}$.

\begin{Example} 
\label{Example:log2proggen}

Figure~\ref{fig:treegenTur} shows how the computation tree of Example~\ref{Example:log2prog} can be generalized with the use of Turchin's relation (Figure~\ref{fig:treegenTur}). For the sake of brevity, the call stacks are represented by the words consisting of only the function names (from the top to the bottom)\footnote{We recall that $w+1$ is a notation for $S(w)$, where $S$ is a constructor (see Subsection~\ref{Subsection:PresLang}).}.

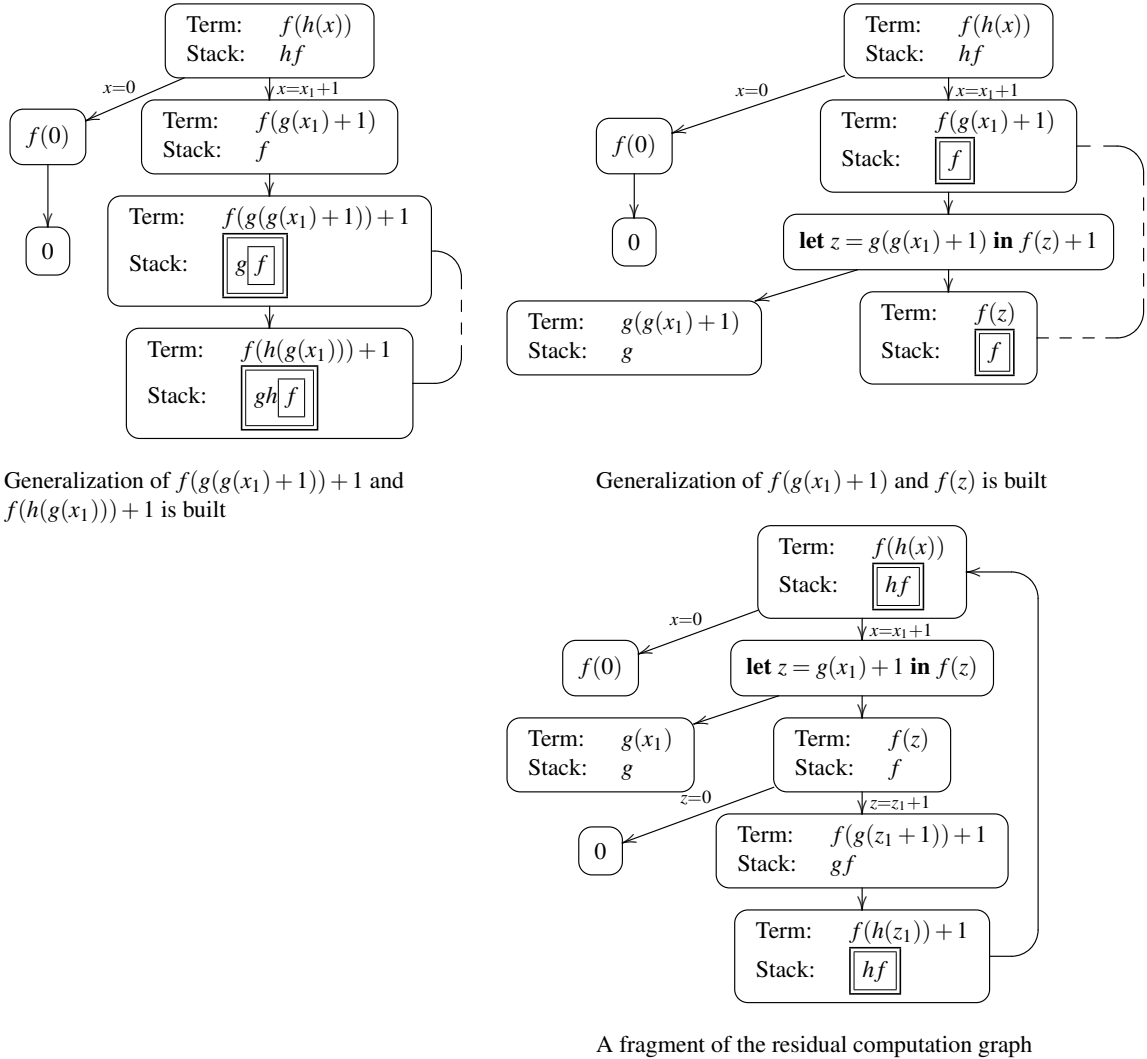
\begin{figure}[t]
%\begin{frameit}

\footnotesize
\begin{tabular}{lll}
 $$
\xymatrix @-5.5mm {
&*+[F-:<5pt>]{\begin{array}{ll} \textrm{Term: }&f(h(x))\\\textrm{Stack: }&{hf}\end{array}}\ar[dl]_<(.4){x=0}\ar[d]^<(.2){x=x_1+1}\\
*++[F-:<5pt>]{f(0)}\ar[d] & *+[F-:<5pt>]{\begin{array}{ll} \textrm{Term: }&{f(g(x_1)+1)}\\\textrm{Stack: }&{f}\end{array}}\ar[d]\\
*++[F-:<5pt>]{0}  & *+[F-:<5pt>]{\begin{array}{ll} \textrm{Term: }&{f(g(g(x_1)+1))+1}\\\textrm{Stack: }&{\doublebox{$g$\framebox[2\width]{$f$}}} \end{array}}\ar[d]&\\
 & *+[F-:<5pt>]{\begin{array}{ll} \textrm{Term: }&{f(h(g(x_1)))+1}\\\textrm{Stack: }&{\doublebox{${gh}$\framebox[2\width]{$f$}}}\end{array}}\ar@{--}`r[ru]`[ul][u]
}
$$ 
&
$$
\xymatrix @-5.5mm {
&*+[F-:<5pt>]{\begin{array}{ll} \textrm{Term: }&f(h(x))\\\textrm{Stack: }&{hf}\end{array}}\ar[dl]_<(.4){x=0}\ar[d]^<(.2){x=x_1+1}\\
*++[F-:<5pt>]{f(0)}\ar[d] & *+[F-:<5pt>]{\begin{array}{ll} \textrm{Term: }&{f(g(x_1)+1)}\\\textrm{Stack: }&{\doublebox{$f$}}\end{array}}\ar[d]\\
*++[F-:<5pt>]{0}  & *++[F-:<5pt>]{\mathbf{let}\textrm{ }{z=g(g(x_1)+1)}\textrm{ }\mathbf{in}\textrm{ }{f(z)+1}}\ar[dl]\ar[d]&\\
*+[F-:<5pt>]{\begin{array}{ll} \textrm{Term: }&g(g(x_1)+1)\\\textrm{Stack: }&{g}\end{array}} & *+[F-:<5pt>]{\begin{array}{ll} \textrm{Term: }&f(z)\\\textrm{Stack: }&{\doublebox{$f$}}\end{array}}\ar@{--}`r[ru]`[uul][uu]
}
$$ \\ \\
Generalization of ${f(g(g(x_1)+1))+1}$ and & \qquad \qquad Generalization of ${f(g(x_1)+1)}$ and
${f(z)}$ is built\\${f(h(g(x_1)))+1}$ is built \\
&
 $$
\xymatrix @-5.5mm {
&*+[F-:<5pt>]{\begin{array}{ll} \textrm{Term: }&f(h(x))\\\textrm{Stack: }&{\doublebox{$hf$}}\end{array}}\ar[dl]_<(.4){x=0}\ar[d]^<(.25){x=x_1+1}\\
*++[F-:<5pt>]{f(0)} & *++[F-:<5pt>]{\mathbf{let}\textrm{ }{z=g(x_1)+1}\textrm{ }\mathbf{in}\textrm{ }{f(z)}}\ar[dl]\ar[d]\\
*+[F-:<5pt>]{\begin{array}{ll} \textrm{Term: }&g(x_1)\\\textrm{Stack: }&{g}\end{array}} & *+[F-:<5pt>]{\begin{array}{ll} \textrm{Term: }&f(z)\\\textrm{Stack: }&{f}\end{array}}\ar[dl]_<(.4){z=0}\ar[d]^{z=z_1+1}\\
*++[F-:<5pt>]{0} & *+[F-:<5pt>]{\begin{array}{ll} \textrm{Term: }&{f(g(z_1+1))+1}\\\textrm{Stack: }&{gf}\end{array}}\ar[d]&\\
& *+[F-:<5pt>]{\begin{array}{ll} \textrm{Term: }&{f(h(z_1))+1}\\\textrm{Stack: }&{\doublebox{$hf$}}\end{array}}\ar@{->}`r[ru]`[uuuul][uuuu]
}
$$\\ \\
& \qquad \qquad A fragment of the residual computation graph 
\end{tabular}
\caption{Generalization by Turchin's relation}
\label{fig:treegenTur}
\end{figure}

\noindent Figure~\ref{fig:progTurMSG} shows the residual programs generated by the final computation graph constructed by generalization w.r.t. Turchin's relation and by the graph generalized with the use of the homeomorphic embedding relation (as described in \cite{Gluck95}). Turchin's relation constructs a shorter residual program with better efficiency. In Example~\ref{Example:log2progcomposite} below, we will see that the composition of the two relations allows us to produce even better generalization. In order to use the composition, we must first prove that the composition is well-binary. The proof is given in Section~\ref{section:TurRelMLPG}.

\begin{figure}[h]
\footnotesize{
$$
\begin{array}{lll}
\begin{array}{|l|l|}\hline \textrm{Residual program on the base} \\ \textrm{of Turchin's relation (according to Figure~\ref{fig:treegenTur})}\\\hline
{f_1(0)=0;}\\ {f_1(x+1)=f_2(g_1(x)+1);}\\\\
{f_2(0)=0;}\\ {f_2(x+1)=f_1(x)+1;}\\\\
{g_1(0)=0;}\\ {g_1(1)=0;}\\
{g_1(x+1+1)=g_1(x)+1;}\\\hline
\end{array}
&\qquad &
\begin{array}{|l|l|}\hline \textrm{Residual program on the base} \\ \textrm{of the homeomorphic embedding}\\\hline
{f_1(0)=0;}\\ {f_1(x+1)=f_2(g_1(x)+1);}\\\\
{f_2(0)=0;}\\ {f_2(x+1)=f_2(g_2(x+1))+1;}\\\\
{g_1(0)=0;}\\ {g_1(1)=0;}\\
{g_1(x+1+1)=g_1(x)+1;}\\\\
{g_2(0)=0;}\\ {g_2(1)=0;}\\
{g_2(x+1+1)=g_2(x)+1;}\\\hline
\end{array}
\end{array}
$$
}
\caption{Residual programs generated using Turchin's relation and using the homeomorphic embedding (Example~\ref{Example:log2proggen})}
\label{fig:progTurMSG}
\end{figure}
\end{Example}
 
\subsection{Formalization of Turchin's Relation for Call-by-Value Semantics}

In this subsection we briefly recall notions from the paper~\cite{ANNTur} where a formal proof of the well-binariness of Turchin's relation is given for the call-by-value semantics. Details are omitted. The full description of the formalization can be found in the original paper~\cite{ANNTur}. We use some notions from the paper as a basis to construct the formalization for the call-by-name semantics. 

Call stack behaviour of the programs based on the call-by-value semantics can be modelled by prefix grammars. Namely, the sequence of the call-stack configurations on every computation path can be represented as a trace generated by a prefix grammar. 

\begin{Definition}
A tuple $\langle \Upsilon, \mathbf{R}, \Gamma_0 \rangle$, where $\Upsilon$ is a finite alphabet, $\Gamma_0\in \Upsilon^+$ is an initial word, and $\mathbf{R}\subset \Upsilon^+\times \Upsilon^*$ is a finite set of rewriting rules, is called \emph{a prefix grammar} if \(R:R_l\rightarrow R_r\in \mathbf{R}\) can be applied only to words of the~form \(R_l \Phi\) (where
\(R_l\) is a prefix of the word \(R_l \Phi\) and $\Phi\in\Upsilon^*$ is arbitrary) and generates only words of the~form \(R_r \Phi\). 

Prefix grammar $\langle \Upsilon, \mathbf{R}, \Gamma_0 \rangle$ is called \emph{an alphabetic prefix grammar} if the the length of left-hand sides  of all the rules in $\mathbf{R}$ is \(1\) (only the first letter of a word is changed by any rule).

\emph{A trace} of a prefix grammar $\mathbf{G}=\langle \Upsilon, \mathbf{R}, \Gamma_0 \rangle$ is a word sequence $\{\Phi_i\}$ (finite or infinite) where $\Phi_1=\Gamma_0$ and for all $i$ $\exists R(R: R_l\rightarrow R_r \logand R\in \mathbf{R} \logand \Phi_i=R_l \Theta\logand \Phi_{i+1}=R_r \Theta)$ (where $\Theta$ is a suffix). In other words, the~elements of the trace are derived from their predecessors by applying the rewriting rules from $\mathbf{G}$.

\end{Definition}

Henceforth, the empty word is denoted by $\Lambda$.

\begin{Example}
\label{Example:CBVgram}

The following prefix grammar models the call stack behaviour of the program of Example~\ref{Example:log2prog} in the call-by-value semantics\footnote{The details of the construction may depend on the interpretation strategy.}.

\footnotesize {$$\begin{array}{|l|l|}\hline \textrm{The rules of the program} &\textrm{The rewriting rules}\\\hline
{f(0)=0;}& f\rightarrow \Lambda \\ {f(x+1)=f(g(x+1))+1;}& f\rightarrow gf\\&\\
{g(0)=0;}& g\rightarrow \Lambda\\ {g(x+1)=h(x);} & g\rightarrow h\\&\\
{h(0)=0;}& h\rightarrow \Lambda\\ {h(x+1)=g(x)+1;} & h\rightarrow g\\\hline
\end{array}$$} 

\normalsize If the initial configuration is $f(h(x))$ then the initial word $\Gamma_0$ of the grammar is $hf$. 
\end{Example}

Finally, we recall how Turchin's relation is defined for the traces generated by a prefix grammar. Given word $\Phi$ in a trace $\{\Psi_k\}$, an occurrence of letter $a$ in $\Phi$ is said to be changed with respect to a segment $[i,j]$ (where $i<j$) of the trace if some rewriting rule was applied to this occurrence of letter $a$ in the segment starting at $\Psi_i$ and ending at $\Psi_j$.

\begin{Example}
Consider the grammar given in Example~\ref{Example:CBVgram}. Let us first apply rule $h\rightarrow g$ to $\Gamma_0$, and then apply rule $g\rightarrow h$ to the result. In the following trace segment

$$\Gamma_0:hf \xrightarrow{h\rightarrow g}{}\Gamma_1:gf\xrightarrow{g\rightarrow h}{} \Gamma_2:hf$$

\noindent the occurrence of $h$ in $\Gamma_2$ is not exactly the same as the occurrence of $h$ in $\Gamma_0$ (it is rewritten by $h\rightarrow g$), so $h$ is changed in $\Gamma_2$ with respect to $[0, 2]$, while the letter $f$ in $\Gamma_2$ is unchanged with respect to $[0, 2]$.
\end{Example}

Now we formalize the definition of Turchin's relation for the function call stacks given in \cite{NemytykhBook}.

\begin{Definition}
Given a prefix grammar \(\mathbf{G}=\langle \Upsilon, \mathbf{R}, \Gamma_0 \rangle\) and a trace $\{\Gamma_k\}$ generated by $\mathbf{G}$, we say that two words $\Gamma_i$, $\Gamma_j$ in $\{\Gamma_k\}$ form \emph{a Turchin pair}  (denoted as $\Gamma_i \preceq \Gamma_j$) if $\Gamma_i = \Phi\Theta_0$, $\Gamma_j = \Phi\Psi\Theta_0$ and the suffix $\Theta_0$ is not changed in the trace segment $[i,j]$.
\end{Definition}

In order to develop this formalization for call-by-name languages, below we introduce an extension of the class of the prefix grammars.

\section{Multi-Layer Prefix Grammars}
\label{section:MLPG}

\subsection{Motivation}

Turchin's relation considers every stack as a word consisting of function names and ignores the arguments of the function calls. Hence, from the point of view of the relation, call-stack configurations form a trace generated by some grammar. So the problem arises: given a program, what class of grammars can generate words that correspond to the call-stack configurations generated by the program? 

For the call-by-value semantics, the construction of the grammar is very straightforward (as can be seen in Example~\ref{Example:CBVgram}). But in the call-by-name semantics, besides the active part of the stack, function calls may appear in the passive part of the configuration. Such function calls make it impossible to predict how the call stack will be transformed if we observe only its active part.

\begin{Example}
\label{Example:CBNpath}

Given a path from the tree of Example~\ref{Example:log2prog}, we see that applications of the same rule $g(x+1)=h(x)$ resulted in different stack transformations. In the first case, the call $g(x_1)$ was popped from the passive part of the configuration.

\begin{center}
\begin{tabular}{l|l|l}
\multicolumn{1}{c|}{Computation path}&\multicolumn{1}{c|}{Applied rule}&\multicolumn{1}{c}{Call stack}\\
\hline 
\begin{tabular}{l}
{\footnotesize${f(h(x))}$}\\
{\footnotesize${f(g(x_1)+1)}$}\\
\footnotesize${f(g(g(x_1)+1))+1}$\\
{\footnotesize${f(h(g(x_1)))+1}$}\\
{\footnotesize${f(h(h(x_2)))+1}$}\\
\end{tabular}
& 
\begin{tabular}{l}\\
{\footnotesize${h(x+1)=g(x)+1}$}\\
{\footnotesize${f(x+1)=f(g(x+1))+1}$}\\
{\footnotesize${g(x+1)=h(x)}$}\\
{\footnotesize${g(x+1)=h(x)}$}
\end{tabular}
& 
\begin{tabular}{l}
\footnotesize{${hf}$}\\
\footnotesize${f}$\\
\footnotesize${gf}$\\
\footnotesize${ghf}$\\
\footnotesize${hhf}$\\
\end{tabular}

\end{tabular}
\end{center}

\end{Example}

Example~\ref{Example:CBNpath} shows that the only way to build a consistent grammar model of the call stack behaviour is to take into account the passive part of the configuration. 

Every configuration is a tree of function calls, both active and passive. Given a configuration, the active call stack forms a path in the configuration tree starting at the root.

\begin{Example}
\label{Example:treeAck}

Given the configuration $${b(x_1+1,b(d(x_1),d(x_2))+1)}$$ we apply the following rule $${b(x_1+1,x_2+1) = b(d(b(x_1,x_2+1)),x_2)}$$ to it. The tree representations of the results are given in Figure~\ref{fig:treeAck}.

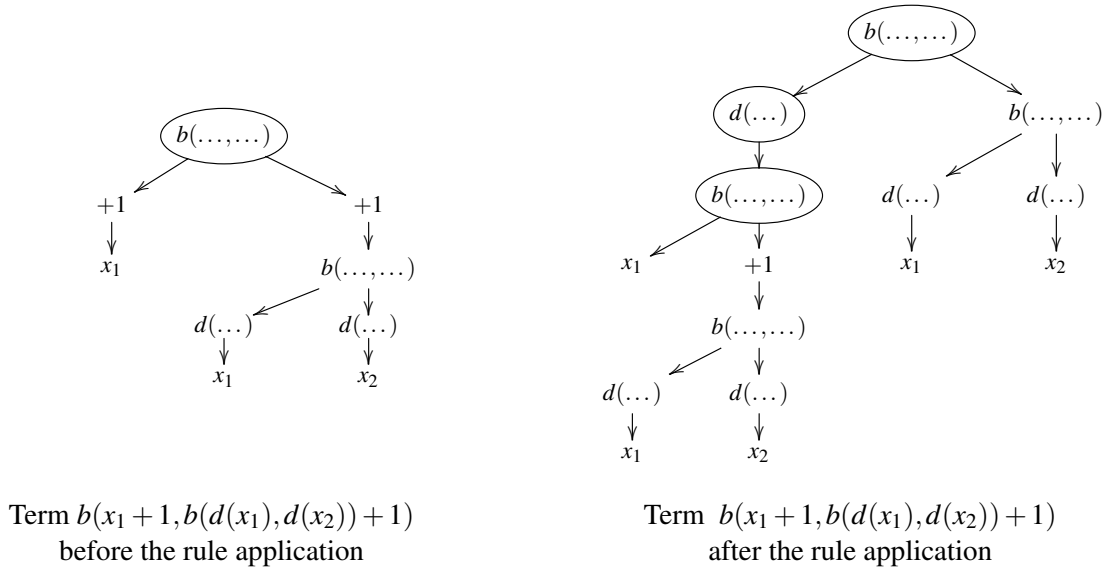
\begin{figure}[h]

\footnotesize
\begin{tabular}{ccc}
$$
\xymatrix @-5mm {\\\\
&&&*++[o][F]{{b(\dots,\dots)}}\ar[dl]\ar[dr]\\
&&{+1}\ar[d]&&{+1}\ar[d]\\
&&*+{x_1}&& *+{{b(\dots,\dots)}}\ar[dl]\ar[d]\\
&&&*{{d(\dots)}}\ar[d]&*{{d(\dots)}}\ar[d]\\
&&&*+{x_1}&*+{x_2}
}
$$ 
&\qquad \qquad \qquad &
$$
\xymatrix @-5mm {
&&*++[o][F]{{{b(\dots,\dots)}}}\ar[dl]\ar[dr]
\\
&*++[o][F]{d(\dots)}\ar[d]&& *+{{b(\dots,\dots)}}\ar[dl]\ar[d]
\\
&*++[o][F]{{b(\dots,\dots)}}\ar[dl]\ar[d]&*+{{d(\dots)}}\ar[d]& *+{{d(\dots)}}\ar[d]
\\
*+{x_1}&{+1}\ar[d]&*+{x_1}&*+{x_2}\\
& *+{{b(\dots,\dots)}}\ar[dl]\ar[d]\\
*+{{d(\dots)}}\ar[d]&*+{{d(\dots)}}\ar[d]\\
*+{x_1}&*+{x_2}
}
$$\\
\\\normalsize
\begin{tabular}{l}{Term }${b(x_1+1,b(d(x_1),d(x_2))+1)}$ \\\multicolumn{1}{c}{{before the rule application}}
\end{tabular}&&\normalsize \begin{tabular}{l}{Term } ${b(x_1+1,b(d(x_1),d(x_2))+1)}$ \\\multicolumn{1}{c}{{after the rule application}}
\end{tabular}
\end{tabular}

\caption{Tree forms of the terms of Example~\ref{Example:treeAck}. The calls in the active call stack are in ovals.}
\label{fig:treeAck}
\end{figure}
\end{Example}

Based on the observations above, we use the following assumptions to construct the grammar models for programs based on the call-by-name semantics.

\begin{enumerate}
\item A configuration can be considered as a tree of calls, and the active call stack --- as a path in the tree. We use a set of labels $\mathbf{S}$ with partial order $\triangleleft$ for denoting the positions of the function calls in the tree.

\item Every call in the stack is modelled by a pair $<\textrm{NAME, }\textrm{LABEL}>$, where $\textrm{LABEL}\in\mathbf{S}$.

\item Every configuration is represented as a word $\Gamma \$ \Delta$ consisting of the two parts separated by the symbol $\$$. The structure of the active stack is placed in $\Gamma$ and is linearly ordered w.r.t. labels, the function calls in the passive part of the configuration are placed in $\Delta$.
\end{enumerate}

\subsection{Formalization}
\label{subsection:forMLPG}

Let $\Upsilon$ be a finite alphabet. Let $\mathbf{S}$ be \emph{a label set} and $\triangleleft$ be a strict (non-reflexive) partial order relation over $\mathbf{S}$. We denote the labels from $\mathbf{S}$ by the letters $s$, $t$ (maybe with subscripts). Let us say that $s_1$ is a child of $s_0$ w.r.t. $\mathbf{S'}\subseteq \mathbf{S}$ (denoted by $s_1=\child(s_0)[\mathbf{S'}]$) if $s_0\triangleleft s_1$, $s_0\in\mathbf{S'}$, $s_1\in\mathbf{S'}$ and there is no such $s_2\in\mathbf{S'}$ that $s_0\triangleleft s_2$ and $s_2 \triangleleft s_1$. The inverse for the child relation is the parent relation. Given a set $\mathbf{S'}\subseteq \mathbf{S}$ and a label $t\in\mathbf{S}\setminus\mathbf{S'}$, we call $t$ \emph{a fresh label} w.r.t. $\mathbf{S'}$ if $\mathbf{S'}$ contains neither descendants nor ancestors of label $t$\footnote{In most cases, we assume that $\mathbf{S'}$ is a set of all previously used labels, hence there is no need to write it in the square brackets in expressions like $\child(s_0)[\mathbf{S'}]$.}. 

Informally, the labels can be considered as nodes of trees with unbounded branching, then the child--parent relation has its usual meaning. 

Henceforth, the set of finite sequences of pairs $\{\langle a, s_i\rangle| a\in \Upsilon \logand s_i\in\mathbf{S}\}^*$ is denoted by $\lw$. Elements of $\lw$ are called \emph{layered words}, and are denoted by Greek capitals $\Gamma$, $\Delta$, $\Phi$, $\Psi$, $\Xi$, $\Theta$. If $\langle a_1, s_1 \rangle\dots\langle a_n, s_n \rangle$ is a layered word, the corresponding \emph{plain word} is defined as $a_1\dots a_n$. 

If $\Phi$ is a layered word, $|\Phi|$ stands for the number of the pairs in $\Phi$ and $\Phi[i]$ stands for the $i$-th pair. For the sake of brevity, layered word $\langle a_1, s_0\rangle \dots \langle a_n, s_0\rangle$ can be also written as $\langle a_1 \dots a_n, s_0\rangle$ (thus, $a \langle s_0 \rangle$ is an equivalent form for $\langle a, s_0\rangle$). 

Expression $\Phi\langle s_0\rangle$ denotes the maximal subsequence of $\Phi$ containing only the pairs labelled with $s_0$. Expression $\Phi\langle \ovl{$s_0$}\rangle$ denotes the maximal subsequence of $\Phi$ not containing the pairs labelled with $s_0$. The set of all labels in $\Phi$ is denoted by $\mathbf{S}_{\Phi}$.

\begin{Example}
Let $\Phi=\langle a_1, s_1\rangle \langle a_2, s_1\rangle \langle a_3, s_2\rangle \langle a_4, s_4\rangle \langle a_5, s_1\rangle \langle a_6, s_3\rangle\langle a_7, s_4\rangle$. Then $\Phi\langle s_1\rangle = \langle a_1 a_2 a_5, s_1\rangle$, $\Phi\langle \ovl{$s_1$}\rangle = \langle a_3, s_2\rangle \langle a_4, s_4\rangle \langle a_6, s_3\rangle\langle a_7, s_4\rangle$. 

If $s_1\triangleleft s_2\triangleleft s_3$, $s_1\triangleleft s_4$ (and $s_4$ is fresh w.r.t. $\{s_2, s_3\}$, then the layered word $\Phi$ can be represented as the following tree:

\UseAllTwocells
\footnotesize
\[
%\entrymodifiers={++[o][F]}
\xymatrix @-5mm {
&*+{\mathrm{s_1}: a_1 a_2 a_5}\ar[dr]\ar[dl]
\\
*+{s_2: a_3}\ar[d]&& *+{s_4: a_4 a_7}
\\
*+{s_3: a_6}
}
\] 

\normalsize The order of the letters in $\Phi$ does matter for the tree representation only if the letters have the same label\footnote{Hence, both the word $\Phi$ and, for example, word $\langle a_6, s_3\rangle \langle a_3, s_2\rangle \langle a_1, s_1\rangle \langle a_4, s_4\rangle \langle a_7, s_4\rangle \langle a_2, s_1\rangle \langle a_5, s_1\rangle$ are presented by the same tree above.}.
 
\end{Example} 

Given a label $s_i$ and natural numbers $K_1$ and $K_2$, we define a set of \emph{layer functions} w.r.t. label $s_i$, $\mathfrak{F}^{s_i}_{K_1, K_2}: \lw\rightarrow \lw$, as a minimal set of functions containing all compositions of $K_1$ elementary functions, which are:

\begin{enumerate}
\item Append $\App^{s_j}[\Psi]$ (where $s_j\in \mathbf{S}$, $\Psi\in\Upsilon^*$): given a layered word $\Phi$, $\App^{s_j}[\Psi](\Phi)$ is the word $\Phi \Psi\langle s_j\rangle$ such that $s_j$ is a child of $s_i$ w.r.t. $\mathbf{S}_{\Phi}\cup\{s_j\}$, $s_j$ is fresh w.r.t. $\mathbf{S}_{\Phi}\setminus \{s_i\}$, and $|\Psi|\leq K_2$. 

For example, if $\App^{s_1}[g]\in \mathfrak{F}^{s_0}_{1, 1}$ and $s_0\triangleleft s_1$, then $${\App}^{s_1}[g](\langle f, s_0\rangle \langle g, s_1\rangle)=\langle f, s_0\rangle \langle g, s_1\rangle \langle g, s_1\rangle$$

 The appending operation, if considered as a tree transformation, appends some new letters to an existing node (Figure~\ref{fig:appendtree}).

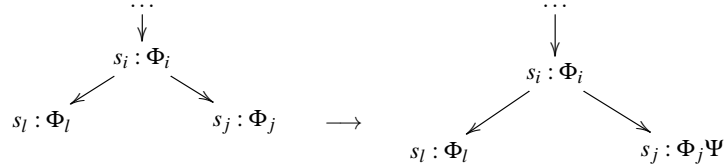
\begin{figure}[ht]
\footnotesize
\begin{center}
\begin{tabular}{lll}
$$\entrymodifiers={+}
\xymatrix @-5mm { 
&{\dots}\ar[d]\\
&{s_i: \Phi_i}\ar[dl]\ar[dr]
\\
{s_l:\Phi_l}&& {s_j: \Phi_j}
}
$$ 
&$$\xymatrix @-3mm { 
\\\\
{\longrightarrow}}
$$ &
$$\entrymodifiers={+}
\xymatrix @-3mm { & {\dots}\ar[d]\\
&{s_i: \Phi_i}\ar[dl]\ar[dr]
\\
{s_l:\Phi_l}&& {s_j: \Phi_j \Psi}
}$$
\end{tabular}
\end{center}
\caption{Function $\App^{s_j}[\Psi]$ as a tree operation}
\label{fig:appendtree}
\end{figure}

\item Insert $\Lift^{s_j}[\Psi\langle s_k\rangle]$ (where $s_j, s_k\in \mathbf{S}$, $\Psi\in\Upsilon^*$): given $\Phi$ with a non-empty $\Phi\langle s_j\rangle$, where $s_j$ is a child of $s_i$ w.r.t. $\mathbf{S}_{\Phi}$, $\Lift^{s_j}[\Psi\langle s_k\rangle](\Phi)$ is the word $\Phi\Psi\langle s_k\rangle$ where $|\Psi|\leq K_2$ and $s_k$ is a child of $s_i$ w.r.t. $\mathbf{S}_{\Phi}\cup{\{s_k\}}$, $s_k$ is fresh w.r.t. $\mathbf{S}_{\Phi}\setminus \{s_i\}$ and $s_j$ is a child of $s_k$ w.r.t. $\mathbf{S}_{\Phi}\cup{\{s_k\}}$. 

For example, if $\Lift^{s_1}[gf\langle s_{2}\rangle]\in \mathfrak{F}^{s_0}_{1, 1}$\footnote{This condition implies that $s_2\triangleleft s_1$ and $s_0\triangleleft s_2$.} and $s_0\triangleleft s_1$, then $${\Lift}^{s_1}[\langle gf, s_{2}\rangle](\langle f, s_0\rangle \langle g, s_1\rangle)=\langle f, s_0\rangle \langle g, s_1\rangle \langle gf, s_{2}\rangle$$

The insert operation differs from the append operation only by introduction of an unused child label $s_k$, which marks the newly appended word $\Psi$. In tree terms, this operation inserts a new node between the nodes labelled by $s_i$ and $s_j$ (Figure~\ref{fig:inserttree}).

\begin{figure}[ht]
\footnotesize
\begin{center}
\begin{tabular}{lll}
$$\entrymodifiers={+}
\xymatrix @-5mm { 
&{\dots}\ar[d]\\
&{s_i: \Phi_i}\ar[dl]\ar[dr]
\\
{s_l:\Phi_l}&& {s_j: \Phi_j}
}
$$ 
&$$\xymatrix @-3mm { 
\\\\
{\longrightarrow}}
$$ 
&
$$\entrymodifiers={+}
\xymatrix @-3mm { & {\dots}\ar[d]\\
&{s_i: \Phi_i}\ar[dl]\ar[dr]
\\
{s_l:\Phi_l}&& {{s_k:\Psi}}\ar[d]
\\
&& {s_j:\Phi_j}
}$$
\end{tabular}
\end{center}
\caption{Function $\Lift^{s_j}[\Psi\langle s_k\rangle]$ as a tree operation}
\label{fig:inserttree}
\end{figure}
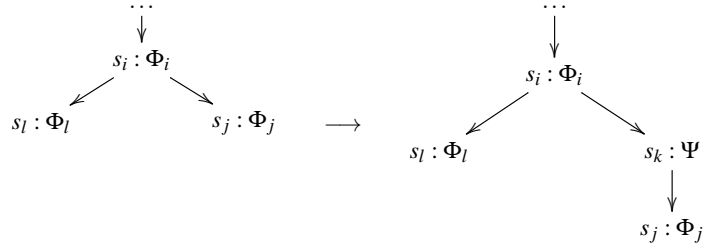

\item Deleting $\Del^{s_j}$ (where $s_j\in \mathbf{S}$): given $\Phi$ with a non-empty $\Phi\langle s_j\rangle$, $s_j=\child(s_i)$ w.r.t. $\mathbf{S}_{\Phi}$, $\Del^{s_j}$ erases $\Phi\langle s_j\rangle$ from $\Phi$ together with all $\Phi\langle t\rangle$ for which $s_j\triangleleft t$.

For example, if ${\Del}^{s_{01}}\in \mathfrak{F}^{s_0}_{1, 1}$ and $s_0\triangleleft s_{01}$, $s_{02}$ is incomparable with $s_{01}$, then $${\Del}^{s_{01}} (\langle d, s_{01}\rangle \langle d, s_{02}\rangle)=\langle d, s_{02}\rangle$$ 

In tree terms, this operation deletes the subtree, whose uppermost node is labelled by $s_j$ (Figure~\ref{fig:deletetree}).

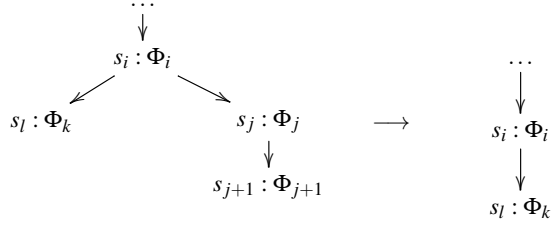
\begin{figure}[ht]
\footnotesize
\begin{center}
\begin{tabular}{llll}
$$\entrymodifiers={+}
\xymatrix @-5mm { 
&{\dots}\ar[d]\\
&{s_i: \Phi_i}\ar[dl]\ar[dr]
\\
{s_l:\Phi_k}&& {{s_j: \Phi_j}}\ar[d]\\
&& {{s_{j+1}: \Phi_{j+1}}}}
$$ 
&
$$\xymatrix @-3mm { 
\\\\
{\longrightarrow}}
$$&&
$$\entrymodifiers={+}
\xymatrix @-3mm { \\ {\dots}\ar[d]\\
{s_i: \Phi_i}\ar[d]
\\
{s_l:\Phi_k}
}$$
\end{tabular}
\end{center}
\caption{Function $\Del^{s_j}$ as a tree operation}
\label{fig:deletetree}
\end{figure}

\item Copying $\Copy^{s_j}$ (where $s_j\in \mathbf{S}$): given $\Phi$ with a non-empty $\Phi\langle s_j\rangle$, $s_j=\child(s_i)$ w.r.t. $\mathbf{S}_{\Phi}$, $\Copy^{s_j}$ appends $\Phi\langle s_k\rangle$ to $\Phi$, where $s_k$ is a child of $s_i$ w.r.t. $\mathbf{S}_{\Phi}\cup\{s_j\}$, $s_j$ is fresh w.r.t. $\mathbf{S}_{\Phi}\setminus \{s_i\}$, and then it appends all subsequences $\Phi\langle s_l\rangle$ labelled by the children of $s_j$ and labels them by fresh children of $s_l$ and so on until all the sequences $\Phi\langle t\rangle$, where $s_j\triangleleft t$, are copied exactly once. 

For example, if ${\Copy}^{s_{01}}\in \mathfrak{F}^{s_0}_{1, 1}$ and $s_0\triangleleft s_{01}$, then $${\Copy}^{s_{01}} (\langle d, s_{01}\rangle)=\langle b, s_0\rangle \langle d, s_{01}\rangle\langle d, s_{02}\rangle,$$ where $s_{02}$ is incomparable with $s_{01}$.

In tree terms, this operation creates a copy of the subtree, whose uppermost node is labelled by $s_j$ (Figure~\ref{fig:copytree}).

\begin{figure}[ht]
\footnotesize
\begin{center}
\begin{tabular}{lll}
$$\entrymodifiers={+}
\xymatrix @-5mm { 
&{\dots}\ar[d]\\
&{s_i: \Phi_i}\ar[dl]\ar[dr]
\\
{s_l:\Phi_l}&& {s_j: \Phi_j}\ar[d]\\
&&{s_{j+1}: \Phi_{j+1}}
}
$$ 
&$$\xymatrix @-3mm { 
\\\\
{\longrightarrow}}
$$
&
$$\entrymodifiers={+}
\xymatrix @-3mm { & {\dots}\ar[d]\\
&{s_i: \Phi_i}\ar[dl]\ar[d]\ar[dr]
\\
{s_l:\Phi_l}& {s_j:\Phi_j}\ar[d] & {{s_k: \Phi_j}}\ar[d]\\
&{s_{j+1}: \Phi_{j+1}}&{{s_{k+1}: \Phi_{j+1}}}
}$$
\end{tabular}
\end{center}
\caption{Function $\Copy^{s_j}$ as a tree operation}
\label{fig:copytree}
\end{figure}
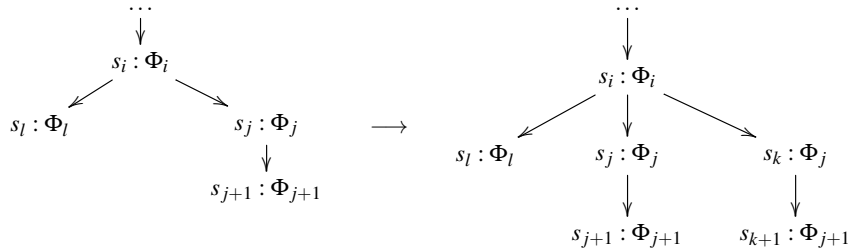

\end{enumerate}

Section~\ref{section:MCSBMLG} below shows how the elementary layer functions model the call stack transformations.

\begin{Definition}
Let us consider a tuple $\mathbf{G}=\langle \Upsilon, \mathbf{S},\mathbf{R}, \mathfrak{F}^v_{K_1, K_2},\Gamma_0\$\Delta_0\rangle$ where $\Gamma_0$ and $\Delta_0$ are layered words over $\Upsilon\times \mathbf{S}$ such that for every $\Gamma_0[i]=\langle a_i, s_i\rangle$ and $\Gamma_0[j]=\langle a_j, s_j\rangle$, if $j>i$ then $s_j\triangleleft s_i$ or $s_j=s_i$, $\$$ is a special symbol, $\$\notin\Upsilon$, and $\mathfrak{F}^v_{K_1, K_2}$ is a finite set of layer function forms where $v$ runs over the label set $\mathbf{S}$. For every $\mathbf{G}$-word $\Gamma\$\Delta$, where $\Gamma$ and $\Delta$ are words in $\lw$, we call $\Gamma$ \emph{the visible layer}, and we call $\Delta$ \emph{the invisible layer} of $\Gamma\$\Delta$.

Let all rewriting rules from $\mathbf{R}$ have one of the following forms:

\begin{itemize}
\item Simple rule: $$\Xi\langle a, s_i\rangle \Theta\$\Psi\rightarrow \Phi \Theta \$ F^{s_i}(\Psi),$$ where all the letters of $\Phi$ are labelled either by $s_i$ or by fresh descendants of $s_i$, $F^{s_i} \in \mathfrak{F}^{s_i}$. 

\item Pop rule: for $\Psi\langle s_j \rangle$ --- the maximal subsequence of $\Psi$ marked by some $s_j=\child(s_i)\in \mathbf{S}$, $$\Xi\langle a, s_i\rangle \Theta\$\Psi\rightarrow \Psi\langle s_j \rangle \Phi\Theta \$ F^{s_i}(\Psi),$$  where all the letters of $\Phi$ are labelled either by $s_i$ or by fresh descendants of $s_i$, $F^{s_i} \in \mathfrak{F}^{s_i}$. In a pop rule, we may specify $s_j$, but there are no ways to specify $\Psi\langle s_j \rangle$.
\end{itemize}    

Such a grammar $\mathbf{G}$ is called {a multi-layer prefix grammar}. $K_2$ is called {the maximal rewrite depth}, $K_1$ is called {the maximal replication index}. A sequence of $\mathbf{G}$-words starting at $\Gamma_0\$\Delta_0$ that are transformed by the rules from $\mathbf{R}$ is called {a trace} of $\mathbf{G}$. 

If any rule of such a grammar changes only one letter of the visible layer (thus, $\Xi=\Lambda$), then the multi-layer prefix grammar is \emph{alphabetic}. 
\end{Definition}

\begin{Definition}
Let $\Phi\Theta\$\Delta_i$ be the $i$-th $\mathbf{G}$-word in a trace $\{\Gamma_k\$\Delta_k\}$ generated by an alphabetic multi-layer prefix grammar $\mathbf{G}$. If $\Gamma_j\$\Delta_j=\Psi\Theta\$\Delta_j$ ($j>i$) then we say that $\Psi$ is \emph{a derivative prefix} (or simply \emph{a~derivative}) of $\Phi$ (denoted by $\deriv(\Phi)$).
\end{Definition}

Now we can prove some simple propositions about the multi-layer grammars.

\begin{Prop}
\label{Prop:VisOrder}

The following properties hold.
\vspace{1pt}
\begin{enumerate}
\item Given an alphabetic multi-layer prefix grammar $\mathbf{G}$ and a word $\Gamma\$\Delta$ generated by $\mathbf{G}$, for every  $\Gamma[i]=\langle a_i, s_i\rangle$, $\Gamma[j]=\langle a_j, s_j\rangle$ if $i>j$ then either $s_i=s_j$ or $s_i\triangleleft s_j$.

\item Let an alphabetic multi-layer grammar $\mathbf{G}$ generate $\Gamma\$\Psi$ with the non-empty $\Psi\langle s_1\rangle$ and $\Psi\langle s_1\rangle$ such that $s_1$ and $s_2$ are unequal and incomparable. Then derivatives of $\Psi\langle s_1\rangle$ and $\Psi\langle s_2\rangle$ cannot occur in the visible part of the same word.

\item Let $\mathbf{G}$ generate a word $\Xi=\langle a, s_i\rangle \Gamma \langle b,s_j\rangle \Theta \$ \Psi$. Given the trace containing $\Xi$, any occurrence of $\deriv(\langle a,s_i \rangle)$ in the visible layer of the word in the trace precedes all occurrences of $\deriv(\langle b,s_j\rangle)$.
\end{enumerate}
\end{Prop}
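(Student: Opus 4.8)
The plan is to prove the three parts in order, each by induction on the length of the trace, using the structure of the simple and pop rules together with the constraints defining the elementary layer functions. The base case is common to all three parts: the initial word $\Gamma_0\$\Delta_0$ satisfies part~1 by the hypothesis on $\Gamma_0$ in the definition of a multi-layer prefix grammar (for $j>i$, $s_j\triangleleft s_i$ or $s_j=s_i$, which after reindexing is exactly the claimed property), and parts~2 and~3 hold vacuously since one can choose $\Gamma_0$ to have no incomparable labels or, more carefully, check that the hypothesis propagates. The real content is the induction step, where we must trace through what each rule form does to the visible layer.

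For part~1, I would argue as follows. Since $\mathbf{G}$ is alphabetic, a rule rewrites a single visible letter $\langle a, s_i\rangle$, replacing it by $\Phi$ whose letters are labelled by $s_i$ or by fresh descendants of $s_i$, and in the pop case additionally prepends $\Psi\langle s_j\rangle$ where $s_j=\child(s_i)$, so every label in $\Psi\langle s_j\rangle$ is $s_j$ or a descendant of $s_j$, hence a proper descendant of $s_i$. In all cases the new block of letters occupying the position formerly held by $\langle a, s_i\rangle$ carries labels that are each either $s_i$ or a (proper) descendant of $s_i$; within that block the internal ordering is by construction nonincreasing in the $\triangleleft$ sense (a parent label appears before its descendants, reading the popped subsequence followed by $\Phi$). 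Combined with the induction hypothesis for the untouched prefix $\Xi=\Lambda$ and suffix $\Theta$ — whose labels are unchanged and whose relative order to $s_i$ was already correct, and where descendants of $s_i$ are $\triangleleft$-below $s_i$ — one checks that the concatenation still has the property that later-indexed letters carry labels equal to or $\triangleleft$-below earlier ones. The one subtlety is the boundary between the new block and the old suffix $\Theta$: here we use that the labels introduced are fresh descendants of $s_i$ and hence incomparable-or-below relative to whatever followed $s_i$, which in turn is $\triangleleft$-below $s_i$ by the induction hypothesis; I would spell out this transitivity step carefully. Note that the freshness of the new labels (fresh w.r.t.\ $\mathbf{S}_\Phi\setminus\{s_i\}$) is what guarantees they are not spuriously comparable to siblings already present.

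For part~2, the claim is a non-coexistence statement: if $\Psi\langle s_1\rangle$ and $\Psi\langle s_2\rangle$ sit in the invisible layer with $s_1,s_2$ incomparable, their derivatives never both appear in the visible layer. I would prove the contrapositive-flavoured invariant: at every step, the set of labels occurring in the visible layer forms a \emph{chain} under $\triangleleft$ (this is essentially part~1 restated: by part~1 the visible labels are linearly ordered). Since $s_1$ and $s_2$ are incomparable, they cannot both lie in a chain, so at no point can derivative-prefixes labelled by $s_1$ and by $s_2$ (or by descendants of each) both be visible — any label that is a descendant of $s_1$ is incomparable with any descendant of $s_2$, because comparability is inherited only along a single branch. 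Here I would need the auxiliary observation that a derivative of a layered word labelled by $s_k$ is itself labelled by $s_k$ or by descendants of $s_k$ — which follows because the only rules that touch a layer labelled $s_k$ are those whose active top letter is labelled $s_k$, and they produce letters labelled by $s_k$ or fresh descendants, while a pop that promotes $\Psi\langle s_j\rangle$ to visibility keeps its $s_j$-and-below labelling. So part~2 reduces to part~1 plus this closure-under-descendants fact.

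For part~3, the claim is an ordering-persistence statement: if $\langle a, s_i\rangle$ precedes $\langle b, s_j\rangle$ in a visible layer, then forever after, any visible occurrence of a derivative of $\langle a, s_i\rangle$ precedes any visible occurrence of a derivative of $\langle b, s_j\rangle$. I would prove this by induction on trace length, examining which letter the rule rewrites. If the rule rewrites a letter strictly before the $\langle a, s_i\rangle$ block, or strictly after the $\langle b, s_j\rangle$ block, the relative order of the two blocks is untouched and we are done by the induction hypothesis. If it rewrites a letter inside (or equal to the top of) the $\langle a, s_i\rangle$-derivative block, the replacement block (popped subsequence plus $\Phi$) is inserted in place, entirely before whatever followed — so it still precedes the $\langle b, s_j\rangle$-derivative; and since that replacement consists of $s_i$-or-descendant labels, it still counts as a derivative of $\langle a, s_i\rangle$. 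Symmetrically for a rewrite inside the $\langle b, s_j\rangle$ block. A rewrite strictly between the two blocks (rewriting some $\langle c, s_k\rangle$ with $s_i \triangleright s_k \triangleright s_j$ by part~1) only produces more $s_k$-or-descendant material sitting between them, preserving the order. The \textbf{main obstacle} I anticipate is precisely this part~3 bookkeeping: one must be careful that a \emph{pop} rule, which can move a whole subsequence $\Psi\langle s_j\rangle$ out of the invisible layer and prepend it to the visible replacement block, does not violate the ordering — the key point is that the popped material is prepended at the \emph{same position} where the rewritten letter sat (just before $\Theta$), so it lands inside the block being replaced and does not leapfrog past any pre-existing visible letter. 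Making the definition of "derivative prefix" precise enough to absorb these pop-promoted subsequences, and checking it is stable under all four elementary layer functions inside $F^{s_i}$, is the fiddly heart of the argument; everything else is a routine induction resting on part~1.
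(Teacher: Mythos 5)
For parts 1 and 2 your argument is essentially the paper's: an induction along the trace showing that each rule replaces the rewritten visible letter by material labelled by the same label or by fresh descendants of it, while a pop additionally prepends a child-labelled subsequence, so the visible labels stay linearly ordered; part 2 then reduces to part 1 via the observation that derivatives of incomparably labelled material carry incomparable labels. Your extra care about the boundary with the untouched suffix is welcome but does not change the route.

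Part 3 is where the proposal has a genuine gap, and it stems from treating the rules as positional, in-place rewrites. In a multi-layer \emph{prefix} grammar every rule rewrites a prefix of the visible layer (for an alphabetic grammar, exactly its first letter), so your cases ``rewrites a letter strictly before the $\langle a,s_i\rangle$ block'', ``strictly after the $\langle b,s_j\rangle$ block'' and ``strictly between the two blocks'' describe situations that cannot occur, and the picture of two derivative blocks coexisting in the visible layer and being replaced where they stand misses the actual mechanism. What really happens is temporal: $\langle a,s_i\rangle$ must be consumed at the top of the stack before $\langle b,s_j\rangle$ can ever be rewritten, and by that time the surviving derivatives of $\langle a,s_i\rangle$ live only in the \emph{invisible} layer, deposited there by $\App$, $\Lift$ and $\Copy$. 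Derivatives of $\langle b,s_j\rangle$ are generated only afterwards and are appended to the \emph{end} of the invisible layer, after those of $\langle a,s_i\rangle$, carrying labels that are never below the labels of $\deriv(\langle a,s_i\rangle)$; since a pop rule always pops a child of the current top label, it is this label discipline that prevents a pop from ever placing $\deriv(\langle b,s_j\rangle)$ in front of a still-visible $\deriv(\langle a,s_i\rangle)$. That is the content of the paper's proof of part 3, and it is exactly the step you call ``the fiddly heart of the argument'' and defer: your remark that popped material ``lands inside the block being replaced'' does not determine \emph{which} of the two families of derivatives can be popped at a given moment, which is the whole claim.
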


\begin{proof}
1. By the definition, the initial word $\Gamma_0\$\Delta_0$ satisfies the stated property. Given a word $\langle a, s_i\rangle \Gamma\$\Delta$ satisfying the stated property, a non-pop rule can only prepend descendants of $\langle a, s_i\rangle$ to $\Gamma$, and a pop rule can only append descendants of $\langle a, s_i\rangle$ to $\Gamma$. Therefore, all the words generated from $\langle a, s_i\rangle \Gamma\$\Delta$ must satisfy the stated property.  

\vspace{5pt}
2. If $s_1$ and $s_2$ are incomparable then the labels of the derivatives of $\Psi\langle s_1\rangle$ and $\Psi\langle s_2\rangle$ are also incomparable. And the stated property follows from the case 1.

\vspace{5pt}
3. Given a trace generated by $\mathbf{G}$, $\deriv(\langle b,s_j\rangle)$ can be generated only after word $\langle b,s_j\rangle \Theta \$ \Psi'$ occurring in the trace. Word $\langle b,s_j\rangle \Theta \$ \Psi'$ can contain $\deriv(\langle a,s_i\rangle)$ only in the invisible layer. All $\deriv(\langle b,s_j\rangle)$ are appended to the end of the invisible layer (by the definitions of $App^{s_k}[\Psi]$, $\Lift^{s_k} [\Psi\langle s_u \rangle]$, $\Copy^{s_k}$). They are marked by the labels, which are not less (but may be incomparable) than the labels of $\deriv(\langle a,s_i \rangle)$. Hence, after applying the pop rules, $\deriv(\langle b,s_j\rangle)$ follows $\deriv(\langle a,s_i\rangle)$ in the visible layer.
\end{proof}

To specify the notion of a language generated by a multi-layer prefix grammar, we add one or several halting rules $R^{[i]}_{\nil}$ to the set $\mathbf{R}$. They are of the usual form, but halt the computation path. 

\vspace{10pt}
\begin{Definition}
\emph{A language generated by} $\mathbf{G}$ is the set of all the words $A\in\Upsilon^*$ where $A$ is a plain word corresponding to the visible layer $\Gamma$ of some layered word $\Gamma\$\Delta$, such that:
\begin{itemize}
\item $\Gamma\$\Delta$ is generated in a finite trace of multi-layer prefix grammar $\mathbf{G}$;
\item $\Delta$ is the result of an application of a halting rule.
\end{itemize}
\end{Definition}

The rules $R^{[i]}_{\nil}$ can be considered as function definitions that cause the printing side effect. When some $R^{[i]}_{\nil}$ is applied, we output the stack configuration where it happened.

\begin{Theorem}
\label{Theorem:CompComp}
Every recursively enumerable set can be generated by a multi-layer prefix grammar. 
\end{Theorem}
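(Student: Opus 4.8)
The plan is to simulate an arbitrary Turing machine (or, equivalently, a two-counter machine or a queue automaton) by a multi-layer prefix grammar, exploiting the fact that the invisible layer $\Delta$ is a genuine second store whose contents can be pushed back onto the visible layer by pop rules. The most convenient target is a machine model whose single unbounded resource is itself queue-like or tree-shaped, since the layer functions $\App$, $\Lift$, $\Del$, $\Copy$ together with the pop rules give exactly the primitives one needs to append to, inspect, duplicate, and discard subtrees of the invisible layer while keeping the visible layer as a ``program counter''. Concretely I would encode the Turing-machine tape in the invisible layer: the portion to the left of the head as one labelled branch, the portion to the right as another, with the current cell at the top; simulating a move then amounts to a bounded rewrite of the top pair (implemented by a simple rule with a small $\mathfrak{F}$ function of bounded rewrite depth $K_2$) combined with a pop rule that shifts one symbol across the head. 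The finite control state of the machine is carried in the (bounded) visible layer together with the letter under the head.

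The key steps, in order, are: (1) fix the machine model and a standard encoding of its configuration as a layered word $\Gamma\$\Delta$ — here I would be explicit that the visible layer has bounded length (it just holds a state symbol and a constant amount of scratch), while all unbounded information lives in the tree-structured invisible layer, with distinct incomparable child labels used for ``left half of tape'' and ``right half of tape''; (2) for each transition of the machine, write down the finitely many multi-layer prefix rules (simple and pop) that realize it, checking that each required tape manipulation — rewrite the scanned symbol, move left, move right, possibly allocate a fresh blank cell at the end of a branch via $\App$ — is expressible with the allowed elementary layer functions and respects the freshness and child-relation side conditions; (3) verify the converse direction of the simulation, i.e.\ that the grammar has no ``extra'' computations: because the visible layer faithfully tracks the control state and because the pop rules only fire when the corresponding child label is present, every trace of the grammar corresponds to a legal run of the machine; (4) attach a halting rule $R^{[i]}_{\nil}$ at each accepting state, so that the language generated (the set of visible-layer plain words at halting) is nonempty exactly when the machine accepts, which by standard padding/output conventions yields every recursively enumerable set. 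One should also note that the construction can be made alphabetic, since each rule rewrites only the single top pair of the visible layer.

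The main obstacle I expect is step (2) together with the bookkeeping of labels in step (3): the layer functions come with rather rigid side conditions (a child must be created as a \emph{fresh} descendant; $\Del^{s_j}$ and $\Copy^{s_j}$ act on an \emph{entire} subtree; a pop rule cannot choose \emph{which} occurrence of $\Psi\langle s_j\rangle$ it pops), so the tape encoding has to be designed so that at every moment there is exactly one child label playing the role of ``rest of the left tape'' and exactly one playing ``rest of the right tape'', and that moving the head does not accidentally orphan or duplicate material. I would handle this by maintaining the invariant, provable by induction along the trace using Proposition~\ref{Prop:VisOrder}, that the invisible layer always has the shape: a spine of single-cell nodes for the two tape halves, each node a fresh child of the previous, so that each $\App$/$\Del$/pop step touches a uniquely determined node. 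Once this invariant is in place the simulation in both directions is routine, and the theorem follows; if encoding a full Turing tape proves too fiddly one can fall back on simulating a queue machine (a finite-state control plus one FIFO queue, which is already Turing-complete), for which the invisible layer needs only a single linear branch and the pop rule is a perfect match for a dequeue operation.
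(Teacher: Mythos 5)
Your overall strategy (simulate a Turing machine, using the invisible layer as the unbounded store and pop rules to retrieve material) is the same as the paper's, but your concrete encoding diverges from it in two places where it runs into real trouble. First, you propose to keep \emph{both} tape halves in the invisible layer as two incomparable child branches and to keep the visible layer bounded. A pop rule can only pop $\Psi\langle s_j\rangle$ for $s_j=\child(s_i)$ where $s_i$ labels the \emph{current} first visible letter, and by Proposition~\ref{Prop:VisOrder}(2) derivatives of two incomparably labelled subsequences can never both reach the visible layer. So as soon as you pop a cell from, say, the left-half branch, the new visible top carries that branch's label and the right-half branch (incomparable with it) becomes permanently unreachable by any further pop. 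Your proposed invariant (a spine of single-cell fresh children per branch) does not repair this, because the two branches must still hang off a common ancestor as incomparable labels. The paper sidesteps the problem entirely by a different split: the tape to the \emph{right} of the head lives in the visible layer as an ordinary suffix, and only the left half is stored invisibly, as a single linear chain of labels (rational indices are used so that a fresh child can always be interposed); then every pop targets the unique immediate child of the current top's label.

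Second, your closing remark that ``the construction can be made alphabetic'' is unjustified and almost certainly false as stated. A transition must depend jointly on the control state and the scanned symbol; after a pop the retrieved symbol is prepended \emph{in front of} the state letter in the visible layer, so a rule that inspects only the first visible pair cannot branch on both, and the pop rule itself cannot inspect what it is about to pop. This is precisely why the paper's rules match and rewrite the first \emph{two} visible letters (patterns $\langle q_1^R,s_i\rangle\langle a_1,s_j\rangle$ and $\langle a_1,s_j\rangle\langle q_1^L,s_i\rangle$), and why the paper explicitly notes, right after Theorem~\ref{Theorem:CompComp}, that the constructed grammar is \emph{not} alphabetic; the alphabetic subclass is only shown to exceed context-free grammars and one-state Turing machines, not to be Turing-complete. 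To fix your write-up you should either adopt the paper's visible/invisible split of the tape, or redesign the two-branch encoding so that the two halves are never simultaneously needed as incomparable siblings, and you should drop the alphabeticity claim.
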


The proof of Theorem~\ref{Theorem:CompComp} can be found in Appendix (proof~\ref{proof:thcompcomp}). Every rule in the grammar constructed in the proof changes the two first letters of the visible part of a $\mathbf{G}$-word. In terms of computations, this means that the two calls belonging to the active stack, which is modelled by the visible part, are evaluated by a single action. Usually, only a single call from the top of the stack is evaluated, so the modelling grammar is alphabetic. In the case of the plain prefix grammars, the class of the generating alphabetic prefix grammars defines the regular languages as well as the class of all prefix grammars \cite{Caucal}. In the case of the alphabetic multi-layer prefix grammars, the situation changes drastically. We can informally compare their power to 1-state Turing machines, although the grammars modelling Turing machines use only the insert layer functions, while the copy and append functions are left aside. 

\begin{Prop}
\label{Prop:CompAlphabeticMLPG}

Alphabetic multi-layer prefix grammars are strictly stronger\footnote{They generate all the languages that can be generated by these models, and some languages that cannot be generated by these two models.} than both context-free grammars and one-state Turing machines.
\end{Prop}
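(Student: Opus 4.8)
The plan is to establish the two directions of ``strictly stronger'' separately: (i) every context-free language and every one-state Turing machine language is generated by some alphabetic multi-layer prefix grammar, and (ii) there is a language generated by an alphabetic multi-layer prefix grammar that lies outside both classes. For direction (i), the inclusion of one-state TM languages is essentially already sketched in the discussion following Theorem~\ref{Theorem:CompComp}: a one-state Turing machine operates on a single tape, and its instantaneous descriptions can be encoded as layered words where the tape contents left and right of the head are laid out along a single label chain; the head moves and cell rewrites are then simulated by $\Lift$ (insert) layer functions that splice new cells into the chain, exactly as indicated when the paper says ``the grammars modelling Turing machines use only the insert layer functions.'' I would make this encoding explicit enough to see that each rule changes only the first visible letter, so the grammar is alphabetic. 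For the context-free inclusion, I would simulate a pushdown automaton: the visible layer plays the role of the PDA stack (prefix rewriting on the visible layer is exactly what a plain prefix/PDA stack does), while the invisible layer and the child/parent label machinery are simply not used in an essential way — this recovers the regular case mechanism of plain prefix grammars augmented with the ability to grow the stack, which suffices for all of CF via the standard PDA-to-prefix-grammar construction (cf.\ the plain case~\cite{Caucal} extended by allowing $\App$ to push more than one symbol, which the multi-layer model permits through $K_2>1$).

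For direction (ii) I would exhibit a single witness language separating alphabetic multi-layer prefix grammars from \emph{both} comparands at once. A natural candidate is a non-context-free language such as $\{a^n b^n c^n \mid n\geq 1\}$ or $\{ww \mid w\in\{a,b\}^*\}$, generated by exploiting the $\Copy$ function: copying a subtree duplicates a recorded block of letters, which is precisely the counting/duplication power that context-free grammars lack. Since such a language is not context-free, it is not generated by any context-free grammar; and since it is recursive but (being chosen appropriately, e.g.\ a language that is not even recognized by any one-state Turing machine — one-state TMs are extremely weak and their languages form a proper subclass even of the regular-or-simple languages one can pin down directly) it is not generated by any one-state Turing machine either. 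I would need to either cite or give a short argument bounding the expressive power of one-state Turing machines so that the chosen witness provably escapes that class — this is the one place where a little care is required, since ``one-state Turing machine'' must be given a precise reading (halting by entering an accepting configuration, with the single state being non-final) and its language class characterized at least coarsely.

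The main obstacle I expect is the second half of direction (ii): pinning down exactly what a one-state Turing machine can and cannot do, and choosing the witness language so that it simultaneously (a) is generated by an explicit alphabetic multi-layer prefix grammar using $\Copy$, (b) is demonstrably non-context-free (routine, via the pumping lemma or Parikh's theorem), and (c) provably falls outside the one-state TM class. If the intended reading of ``one-state Turing machine'' makes its language class incomparable-but-weak rather than contained in CF, then one must check that the same witness escapes it; alternatively one can present two witnesses, one escaping CF and one escaping one-state TMs, but a single uniform witness is cleaner. The constructive half — writing down the grammar for the witness and for the PDA/TM simulations — is bookkeeping, with the only subtlety being to verify that every rule touches just the first visible letter so that the grammars produced are genuinely \emph{alphabetic}, matching the statement of the proposition.
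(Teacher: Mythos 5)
Your overall decomposition (two inclusions plus a separating language) matches the paper's, and your treatment of the one-state Turing machine inclusion via the Theorem~\ref{Theorem:CompComp} construction is exactly what the paper does. But both places where you deviate contain real problems. First, in the context-free inclusion you claim the invisible layer is ``not used in an essential way''; this cannot work, because the language of a multi-layer prefix grammar is read off the \emph{visible} layer at the moment a halting rule fires, so the terminal word being derived has to be stored somewhere and then surface in the visible layer at the end. The paper's construction takes the grammar in Greibach normal form, keeps the nonterminal stack in the visible layer, accumulates each emitted terminal in the invisible layer (all labelled $s_1$), and uses a pop rule as the halting rule to move the accumulated terminals into the visible layer. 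Without that accumulation-and-pop step your PDA simulation recognizes rather than generates.

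Second, and more seriously, your separation witness is likely not generable by an alphabetic multi-layer prefix grammar at all. The $\Copy$ function labels the duplicate with a label \emph{incomparable} to that of the original, and Proposition~\ref{Prop:VisOrder} (item 2) shows that derivatives of incomparably labelled blocks can never co-occur in the visible layer of a single word; so a copied block and its source can never both appear in an output word, and the ``duplication power'' of $\Copy$ does not yield $\{ww \mid w\in\{a,b\}^*\}$ (nor, by any obvious route, $\{a^nb^nc^n \mid n\geq 1\}$). The paper instead separates from the two comparands with two different, much cheaper arguments: from one-state Turing machines by observing that alphabetic multi-layer prefix grammars generate all context-free (hence all regular) languages while one-state TMs miss some regular languages (citing the literature), which completely sidesteps your worry about characterizing one-state TM power; and from context-free grammars with the witness $\{b^{2^n}\mid n\in\mathbb{N}\}$, generated not by $\Copy$ but by an $\App$-based doubling loop (the rule $R^{[3]}$ appends $bb$ to the invisible layer for each $b$ consumed from the visible one, and a pop rule recirculates the result). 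Your insistence on a single uniform witness is what forces you into both the $\Copy$ trap and the one-state-TM characterization problem.
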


The proof of Proposition~\ref{Prop:CompAlphabeticMLPG} can be found in Appendix (proof~\ref{proof:compalph}).

Given a word $w=w[1]w[2]\dots w[N]$, \emph{the inverse word of} $w$ is the word $w[N]\dots w[2] w[1]$ (denoted by $\inv(w)$).

\begin{Prop}
There is no rule-set $\mathbf{R}$ such that any alphabetic multi-layer prefix grammar with the rule-set $\mathbf{R}$ and the initial word from $w\in\{a,b\}^*$ can construct $\inv(w)$.

\end{Prop}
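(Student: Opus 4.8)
The plan is to show that no fixed rule-set $\mathbf{R}$ can invert all initial words by a counting argument on how rewriting rules can distribute letters among layers. First I would fix an arbitrary finite rule-set $\mathbf{R}$ and set $M$ to be the maximum, over all rules in $\mathbf{R}$, of the length of the right-hand-side fragment $\Phi$ that a rule prepends (or appends, in the pop case) plus the maximal rewrite depth $K_2$; this $M$ bounds the number of \emph{new} letters any single rule application can introduce into the visible layer. The key observation is that an alphabetic rule acts on the first letter of the visible layer only: it removes that letter, and it may (i) prepend a bounded block $\Phi$ of new letters labelled by $s_i$ or fresh descendants, or (ii) in the pop case, append a block $\Psi\langle s_j\rangle$ pulled from the invisible layer plus a bounded block $\Phi$. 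So the only way a letter of the original initial word $w$ can survive into later visible layers is by being moved into the invisible layer (never happens for letters of the visible part under these rules) — in fact letters of $\Gamma_0$ are simply consumed one at a time from the front, and everything that re-enters the visible layer is either freshly generated (bounded per step) or copied from the invisible layer, whose only population mechanism is, transitively, earlier appends.

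Second I would make this precise by tracking, for the initial word $w = w[1]\cdots w[N]$ with all letters placed in the visible layer $\Gamma_0$ under a single label $s_0$ (the invisible layer $\Delta_0$ I take empty, which is the only fair way to pose the question), a potential function $\mu$ on $\mathbf{G}$-words: roughly, $\mu(\Gamma\$\Delta)$ counts the number of visible letters that are derivatives of the \emph{suffix} $w[2]\cdots w[N]$ of $w$, i.e. of letters other than $w[1]$. Since a rule only ever touches the head of the visible layer, as long as $w[1]$'s derivative has not reached the head, no rule can consume or even inspect any derivative of $w[2],\dots,w[N]$; and by Proposition~\ref{Prop:VisOrder}(3), derivatives of $w[2],\dots,w[N]$ always occur in the visible layer \emph{after} any surviving derivative of $w[1]$. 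Hence in order to produce a word whose \emph{first} letter is $w[N]$ — as $\inv(w)$ requires when $w[N]\ne w[1]$ — the grammar must first consume (delete) every derivative of $w[1]$ from the visible layer, and by the same ordering argument this forces consuming all of $w[2],\dots,w[N-1]$'s derivatives as well before $w[N]$'s derivative can ever reach the head. At that moment the entire original content of the visible layer has been destroyed, so $w[N]$ itself is gone; what sits at the head is a freshly generated letter, determined entirely by $\mathbf{R}$ and by the bounded-length blocks introduced along the way, not by $w$.

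Third, from this structural fact I would extract the contradiction. Choose two initial words $w$ and $w'$ of the \emph{same} length $N$ that agree on $w[1]=w'[1]$ but differ in their last letters, $w[N]\ne w'[N]$ — for instance $w = a b^{N-1}$ has $\inv(w) = b^{N-1} a$, while choosing $w' $ with $w'[1]=a$ and $w'[N]=b$ but, say, $w'=a^{N-1}b$ gives $\inv(w') = b a^{N-1}$. The point is that by the argument above, any $\mathbf{G}$-trace (with rule-set $\mathbf{R}$) starting from $w$ behaves, once the original visible content is exhausted, identically to one starting from $w'$: the sequence of rule applications available and the fresh letters they produce depend only on $\mathbf{R}$ and on $N$ (the number of head-consumptions needed), not on the interior letters, since those interior letters are never read before they are deleted. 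So whatever visible word the $w$-grammar can reach after clearing its original content, the $w'$-grammar with the same rules can reach too, and vice versa. In particular, if some alphabetic multi-layer prefix grammar with rule-set $\mathbf{R}$ and initial word $w$ produces $\inv(w)$ in its language, then a grammar with the same $\mathbf{R}$ and initial word $w'$ produces the \emph{same} word, which is not $\inv(w')$ — contradicting the hypothesis that the rule-set works for \emph{every} initial word.

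The main obstacle I expect is making the ``behaves identically after clearing the original content'' step fully rigorous: one must argue that the invisible layer $\Delta$ at the moment the visible layer is first cleared of $w$-derivatives is also independent of the interior letters of $w$ — which follows because $\Delta$ is populated only by appends/inserts/copies whose appended blocks $\Psi$ come from rule right-hand sides (bounded, data-independent) rather than from the interior of $w$ (the only letters of $w$ that ever leave the visible layer do so via pop rules, which move a whole block $\Psi\langle s_j\rangle$ into visible position, not into $\Delta$, and that block was itself previously appended by a data-independent rule). Verifying that no subtle channel carries information about $w$'s interior past the clearing point — e.g. through which pop rules were enabled, hence through the label structure — is the delicate bookkeeping; Proposition~\ref{Prop:VisOrder} parts (1)–(3) do most of the work, since they pin down the label order and hence the order of consumption, leaving only a finite, $w$-independent amount of nondeterminism.
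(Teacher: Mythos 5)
There is a genuine gap: the indistinguishability step at the heart of your argument is false. You claim that once the original visible content has been consumed, ``the sequence of rule applications available and the fresh letters they produce depend only on $\mathbf{R}$ and on $N$, \dots not on the interior letters, since those interior letters are never read before they are deleted.'' But every interior letter \emph{is} read at the moment it is deleted: an alphabetic rule has a left-hand side $\langle a, s_i\rangle$ matching the specific letter at the head of the visible layer, and its right-hand side applies a layer function $F^{s_i}$ that can deposit letter-dependent content into the invisible layer. For instance, a rule-set containing $\langle a, s_i\rangle\Theta\$\Psi\rightarrow\Theta\$\App^{s_j}[a](\Psi)$ together with the analogous rule for $b$ (or the $\Lift$ variants, which chain the deposited letters along a single descending branch of labels) records the entire interior of $w$ in the invisible layer as it is consumed, and pop rules can later return that material to the visible layer. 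So the set of words reachable after clearing the original visible content \emph{does} depend on the interior of $w$ --- indeed such a grammar can reconstruct the plain word $w$ itself in the visible layer --- and your two initial words $w=ab^{N-1}$ and $w'=a^{N-1}b$ do not have the same reachable sets, so the contradiction does not follow. (Your example is also internally inconsistent: you require $w[N]\ne w'[N]$, yet both of your chosen words end in $b$; but this is minor next to the main issue.)

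What does survive is your ordering observation, and that is exactly the paper's proof. Material returned from the invisible layer always reappears in a way that respects the original left-to-right order of the letters of $w$: by Proposition~\ref{Prop:VisOrder}, part 3, any occurrence of $\deriv(\langle a,s_i\rangle)$ in a visible layer precedes every occurrence of $\deriv(\langle b,s_j\rangle)$ whenever $\langle a,s_i\rangle$ preceded $\langle b,s_j\rangle$ in the word that generated them. A layered word modelling $\inv(w)$ would need the derivative of the last letter of $w$ to precede the derivative of the first, so it can never occur in a trace. That one-line appeal to Proposition~\ref{Prop:VisOrder} is the paper's entire argument; your part about derivatives of $w[2],\dots,w[N]$ always occurring after any surviving derivative of $w[1]$ already gestures at it, and pushing on that observation directly makes the clearing/indistinguishability machinery unnecessary.
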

\begin{proof}
Let $w$ be modelled by the initial word $\Phi_0 \langle a_i, s_i\rangle \Phi_1 \langle b_j, s_j\rangle \$ \Lambda$. The word $\inv(w)$ is modelled by $\langle b_j, t_j\rangle \inv(\Phi_1) \langle a_i, t_i\rangle \inv(\Phi_0)\$\Lambda$. According to Proposition~\ref{Prop:VisOrder}, because $\langle b_j, t_j\rangle=\deriv(\langle b_j, s_j\rangle)$, $\langle a_i, t_i\rangle=\deriv(\langle a_i, s_i\rangle)$, such a word cannot appear in any trace generated by any alphabetic multi-layer prefix grammar. 
\end{proof}

\normalsize This proof shows that the class of the alphabetic multi-layer prefix grammars does not coincide with the classes of the tree automata grammars and linear indexed grammars \cite{HopcroftUlmanBook}. Informally, this class contains grammars that are able to generate very long words of a rather simple structure.

\section{Modelling Call Stack Behaviour by Multi-Layer Grammars}
\label{section:MCSBMLG}

We borrow the notions of $f$-function and $g$-function from \cite{SorensenMsTh} and use them in the following sense. An $f$-function is a function whose definition consists of one rule with trivial patterns (e.\,g., if ${h_1}$ is defined as ${h_1(x_1,x_2)=b(x_2,h_2(x_1+1))}$ then ${h_1}$ is an $f$-function). A $g$-function is a function with non-trivial patterns in the definition (e.\,g., ${h_2(x+1)=b(h_2(x),h_2(x))}$ is a definition of the $g$-function).

In order to get a grammar from a program, we treat every configuration generated by the unfolding as a tree, whose nodes are named by function or constructor names and leaves contain no function calls. First, we mark every function name in the tree by a superscript depending on the state of the function call. If the call is ready to be evaluated without evaluation of other calls, the function name is marked as ``ready'' (by $+$ in the superscript). Otherwise, the function name is marked as ``unready''. Hence, the call names of all $f$-functions are always marked as ``ready'', while the call names of $g$-functions are marked as ``ready'' if the patterns of the functions can be matched without evaluating another call\footnote{For programs in language $\mathbb{L}$, we always can determine all the calls that are ready to be unfolded due to simplicity of the patterns. In languages with complex pattern matching (e.\,g., Refal~\cite{TurRefal5}), that can be done only if one knows the strategy of the pattern matching applied in the interpreter.}.

\vspace{+10pt}
Then we delete all the nodes containing static data\footnote{In some cases, this action can transform the tree into a forest. For example, that can happen if the configuration is ${cons(h_1(x),cons(h_2(x),Nil))}$. To avoid these cases, we always assume that the transformed tree has a root, but the root is a ``virtual'' function call, which is always present in the $\mathbf{G}$-word corresponding to the tree and is denoted by $\$$.}. The remaining nodes are given the layer labels. If some node $T$ is a descendant of a node $W$, the label of $T$ is greater than the label of $W$. Otherwise the labels are incomparable. 

\vspace{+10pt}
Finally, we find all the nodes containing the unready call names with a single child. The child of such a node is given the label of the node. And then, all the nodes with the same labels are merged: data from the ancestor nodes are placed in the merged node after the data from their descendants. 

\vspace{+10pt}
The resulting tree is a tree form of the corresponding layered word.

\vspace{+10pt}
\begin{Example}

Given the term ${b(d(x_1)+1,b(x_1,d(d(x_2)))+1)}$, we transform it to a layered word. 
All the steps of the transformation are given in Figure~\ref{fig:treetolayeredword}.

\vspace{+10pt}
First, we mark the calls as ``ready''(with $+$ in the superscript) and ``unready''(with $-$ in the superscript), and delete the nodes with the static data. The only function call in the configuration which is ready to be evaluated without unfolding is the outermost call of ${b}$. The calls ${d(x_1)}$ and ${d(x_2)}$ require unfolding (which generates restrictions on $x_1$ and $x_2$), but they do not require evaluation of other calls, so they are also marked as ready. All the other calls are marked as unready.

\vspace{+10pt}
Then we assign the layer labels in the resulting tree of the marked call names. The tree below shows that $s_0\triangleleft s_1$, $s_0\triangleleft s_2$, $s_2\triangleleft s_3\triangleleft s_4$.

\vspace{+10pt}
After that, we find all the nodes with a single child whose call name is marked as unready. The child of such a node is given the layer label of its parent. Then we merge all the nodes having the same label and prepend the data in the descendant nodes to the data in the ancestor nodes. In the second tree in Figure~\ref{fig:treetolayeredword}, the two nodes will be given their ancestor's layer label: the nodes labelled by $s_{3}$ and by $s_{4}$. 

\vspace{+10pt}
Finally, the node containing the name of the call in the active stack is extracted. In Figure~\ref{fig:treetolayeredword}, the node is $s_0$ containing function name $b$. This name together with the node label take a place in the visible part of the layered word; data from all the other nodes in the tree (namely, nodes $s_1$ and $s_2$) take a place in the invisible part. 

\begin{figure}[ht]
\footnotesize
\begin{minipage}{0.76\textwidth}
\begin{tabular}{cccc}
$$
%\entrymodifiers={++[o][F]}
\xymatrix @-5mm {
&*+{b(\dots,\dots)}\ar[dl]\ar[d]
\\
*+{+1}\ar[d]&*+{+1}\ar[d]
\\
*+{d(\dots)}\ar[d]& *+{b(\dots,\dots)}\ar[d]\ar[dr]
\\
*+{x_1}&*+{x_1}&*+{d(\dots)}\ar[d]
\\
&&*+{d(\dots)}\ar[d]\\
&&*+{x_2}}
$$
&
$$
\xymatrix @-5mm {
&*+{s_{0}: b^+}\ar[dl]\ar[dd]
\\
*+{s_{1}: d^+}
\\&*+{s_{2}: b^-}\ar[d]\\
&*+{s_{3}: d^-}\ar[d]\ar@{.}@/_2pc/[u]\\
&*+{s_{4}: d^+}\ar@{.}@/_2pc/[u]
}
$$ &
$$
\xymatrix @-5mm {
\\\\
&&*+[o][F]{s_{0}: b}\ar[dl]\ar[dd]
\\
&*+{s_{1}: d}\\
&&*+{s_{2}: ddb}\\
}
$$
\\
\\
\begin{tabular}{c}\quad\normalsize\textbf{(a)}\end{tabular} &
\begin{tabular}{c}\quad\quad\normalsize\textbf{(b)}\end{tabular}&
\begin{tabular}{c}\quad\quad\quad\quad\normalsize\textbf{(c)}\end{tabular}
\end{tabular}
\end{minipage} 
\begin{minipage}{0.22\textwidth}\footnotesize
$$\begin{array}{|l|}
\hline
{\mathrm{Program}} \\\hline\\
{b(0,x_2)=1;}\\ 
{b(x_1,0)=x_1;}\\ 
{b(x_1+1,x_2+1)=}\\
\qquad{b(d(b(x_1,x_2+1)),}\\
\qquad\quad{x_2);}
 \\\\
{d(0)=0;}\\
{d(x+1)=d(x)+1+1;}\\\hline
\multicolumn{1}{c}{}\\
\multicolumn{1}{c}{}\\
\multicolumn{1}{c}{}\\
\multicolumn{1}{c}{}\\
\multicolumn{1}{c}{}\\\end{array}$$
\end{minipage}

\caption{Steps transforming the term ${b(d(x_1)+1,b(x_1,d(d(x_2)))+1)}$ from a tree to layered word $\langle b, s_0\rangle \$ \langle d, s_{01}\rangle \langle ddb, s_{02}\rangle$: 
\textbf{(a)} the initial term in the tree form; 
\textbf{(b)} the call names are marked, the static data are deleted, the layer labels are assigned;
\textbf{(c)} the layer labels are merged, the active part is extracted.}

\label{fig:treetolayeredword}
\end{figure}
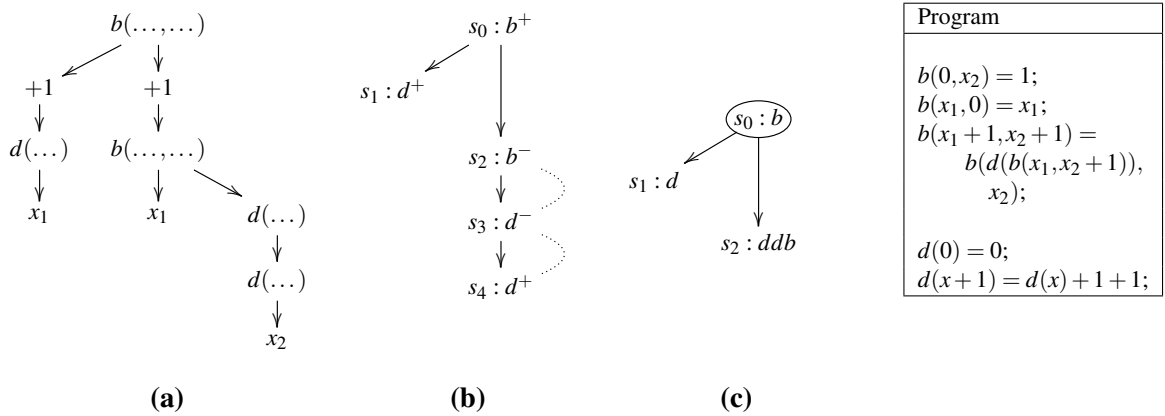

\end{Example}

The layer functions used for defining the multi-layer grammars (see Section~\ref{section:MLPG}) describe one-step actions transforming the passive parts of the function call stacks in language $\mathbb{L}$ (and the other call-by-name programming languages).
\begin{enumerate}
\item Appending ${\App}^{s_j}[\Psi]$ models adding an unready function call $\Psi$ (or a sequence of such function calls) to the passive part of the given configuration. None of the function calls from $\Psi$ can be evaluated immediately. E.\,g., as shown in Figure~\ref{fig:AppProg}.

\begin{figure}[ht]
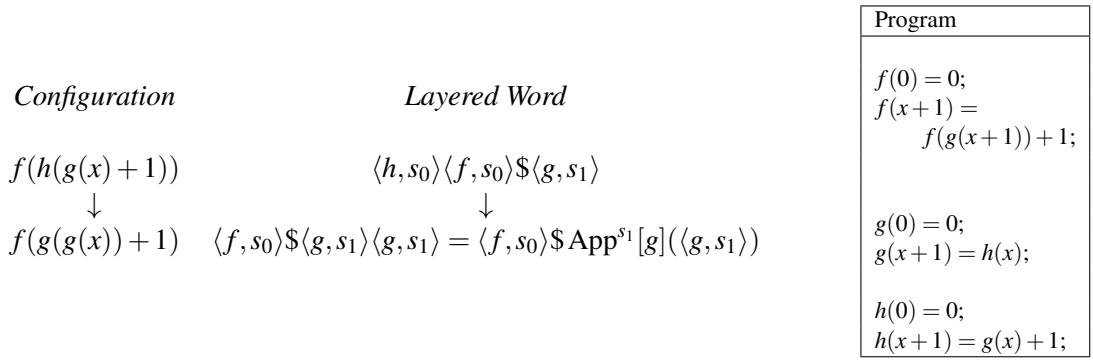

\begin{minipage}{0.69\textwidth}
\begin{tabular}{cc}
\textit{Configuration}& 
\textit{Layered Word}\\\\
${f(h(g(x)+1))}$
& $\langle h, s_0\rangle \langle f, s_0\rangle\$ \langle g, s_1\rangle$\\$\downarrow $ & $\downarrow$\\
${f(g(g(x))+1)}$ & \multicolumn{1}{l}{$\langle f, s_0\rangle\$ \langle g, s_1\rangle {\langle g, s_1\rangle} =\langle f, s_0\rangle\$ \App^{s_1}[g](\langle g, s_1\rangle)$}
\end{tabular}
\end{minipage}
\begin{minipage}{0.24\textwidth}\footnotesize
$$\begin{array}{|l|}\hline
{\mathrm{Program}} \\\hline\\
{f(0)=0;}\\ {f(x+1)=}\\
\qquad{f(g(x+1))+1;}\\
\\\\
{g(0)=0;}\\ {g(x+1)=h(x);}\\\\
{h(0)=0;}\\ {h(x+1)=g(x)+1;}\\\hline
\end{array}$$
\end{minipage}
\caption{Sample model of the call-stack restructuring by using the function ${\App}^{s_j}$}
\label{fig:AppProg}
\end{figure}

\item Insert function $\Lift^{s_j}[\Psi\langle s_k\rangle]$ models adding a ready function call to the invisible layer. The ready function call is a call whose pattern can be matched immediately. E.\,g., as shown in Figure~\ref{fig:InsProg} (the program rules are the same as for $\App^{s_j}$).

\begin{figure}[h!]
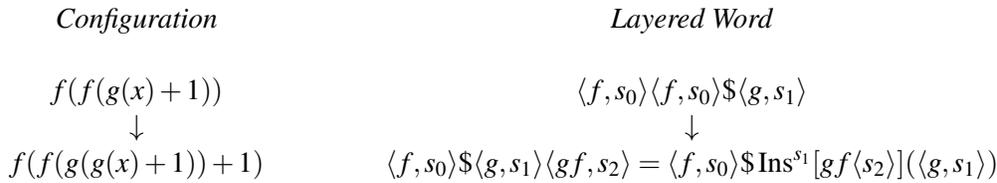

\begin{tabular}{ccc}
\textit{Configuration}& &
\textit{Layered Word}\\\\

${f(f(g(x)+1))}$
& &$\langle f, s_0\rangle \langle f, s_0\rangle\$ \langle g, s_1\rangle$\\
$\downarrow $ & &$\downarrow$\\
${f(f(g(g(x)+1))+1)}$ & \qquad\qquad &\multicolumn{1}{l}{  $\langle f, s_0\rangle\$ \langle g, s_1\rangle {\langle gf, s_2\rangle}=\langle f, s_0\rangle\$ \Lift^{s_1}[gf\langle s_2\rangle](\langle g, s_1\rangle)$}\end{tabular}

\caption{Sample model of the call-stack restructuring by using the function $\Lift^{s_j}[\Psi\langle s_k\rangle]$}
\label{fig:InsProg}
\end{figure}

\item Deleting $\Del^{s_j}$ corresponds to replacement of one argument of a function call by an expression without function calls. E.\,g., as shown in Figure~\ref{fig:DelProg}.

\begin{figure}[ht]
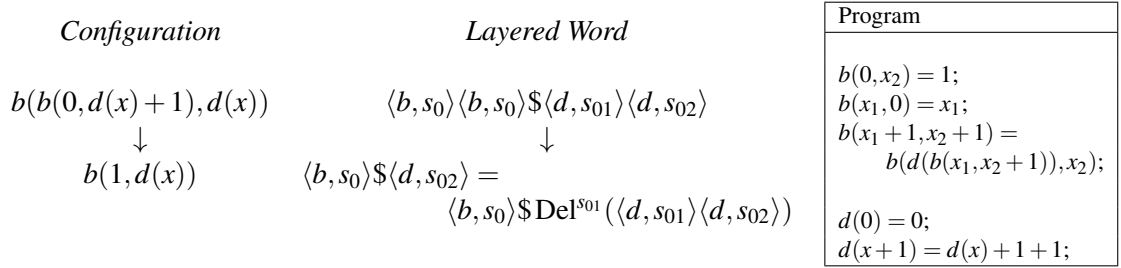

\begin{minipage}{0.68\textwidth}
\begin{tabular}{cc}
\textit{Configuration}& 
\textit{Layered Word}\\\\
${b(b(0, d(x)+1),d(x))}$
& $\langle b, s_0\rangle \langle b, s_0\rangle\${\langle d, s_{01}\rangle} \langle d, s_{02}\rangle$
\\$\downarrow $ & $\downarrow$\\
${b(1,d(x))}$ &  \multicolumn{1}{l}{$\langle b, s_0\rangle\$ \langle d, s_{02}\rangle = $}
\\
&\multicolumn{1}{r}{\qquad\qquad\quad $\langle b, s_0\rangle\$ \Del^{s_{01}} (\langle d, s_{01}\rangle \langle d, s_{02}\rangle)$}
\end{tabular}
\end{minipage}
\begin{minipage}{0.25\textwidth}\footnotesize
$$\begin{array}{|l|}
\hline
{\mathrm{Program}} \\\hline\\
{b(0,x_2)=1;}\\ 
{b(x_1,0)=x_1;}\\ 
{b(x_1+1,x_2+1)=}\\
\qquad{b(d(b(x_1,x_2+1)),x_2);}
 \\\\
{d(0)=0;}\\
{d(x+1)=d(x)+1+1;}
\\\hline
\end{array}$$
\end{minipage}

\caption{Sample model of the call-stack restructuring by using the function $\Del^{s_j}$}
\label{fig:DelProg}
\end{figure}

\item Copying $\Copy^{s_j}$ corresponds to copying one argument of a function call into another. E.\,g., as shown in Figure~\ref{fig:CopyProg} (the program rules are the same as for $\Del^{s_j}$). Since the semantics is call-by-name, the transformation does not retain information about equality of the call and its copy.

\begin{figure}[ht]
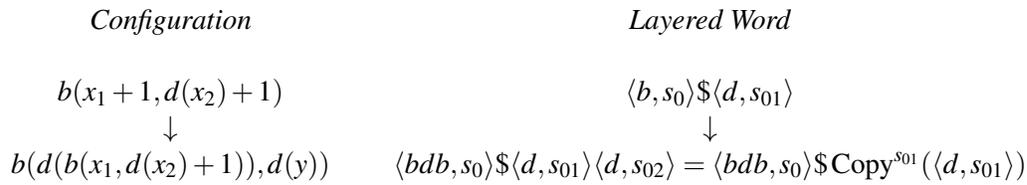

\begin{tabular}{ccc}
\textit{Configuration}& &
\textit{Layered Word}\\\\
${b(x_1+1,d(x_2)+1)}$
& &{$\langle b, s_0\rangle\$ \langle d, s_{01}\rangle$}
\\$\downarrow $ & & $\downarrow$\\
${b(d(b(x_1,d(x_2)+1)),d(y))}$ &\quad&  \multicolumn{1}{l}{$\langle bdb, s_0\rangle\$ \langle d, s_{01}\rangle {\langle d, s_{02}\rangle} = \langle bdb, s_0\rangle\$ \Copy^{s_{01}} (\langle d, s_{01}\rangle)$}
\end{tabular}
\caption{Sample model of the call-stack restructuring by using the function $\Copy^{s_j}$}
\label{fig:CopyProg}
\end{figure}

\end{enumerate}

\vspace{10pt}
Because every function definition is finite, only a finite number of append, delete, copy and insert functions can be applied in one step of the unfolding. That guarantees finiteness of the constants $K_1$ and $K_2$ in the corresponding multi-layer grammars. 

\section{Turchin's Relation and Multi-Layer Grammars}
\label{section:TurRelMLPG}

\begin{Definition}
Let $\mathbf{G}$ be a multi-layer prefix grammar with the set of rules $\mathbf{R}$ such that every rule from $\mathbf{R}$ rewrites at most $N$ letters of a visible layer. Given a trace $\{\Gamma_k\$\Delta_k\}$ and its segment $[i,j]$, suffix $\Theta$ of $\Phi_i$ is called \emph{a~permanently stable} suffix w.r.t. the segment $[i,j]$ if all the words $\Gamma_k\$\Delta_k$, $i\leq k<j$, are of the form $\Phi_k\Theta\$\Delta_k$ where $\Phi_k$ is a layered word with the length not less than $N$, and $\Gamma_j$ is of the form $\Phi_j\Theta$, where $\Phi_j$ may be $\Lambda$ \footnote{In the case of alphabetic prefix grammars, when $N=1$, the first condition implies the second.}. If $j$ is not bounded, $\Theta$ is called \emph{a permanently stable suffix} w.r.t. $i$.
\end{Definition}

Informally, a permanently stable suffix is a suffix of the visible layer that is never changed in the trace segment $[i,j]$. In terms of call stack behavior, a permanently stable suffix corresponds to an unchanged context of the computation.

\begin{Example}
Let a trace of some alphabetic multi-layer grammar be:

\UseAllTwocells
\[
\xymatrix @-3mm {
*+{\Phi_0: \langle hf, s_0\rangle \$ \langle h, s_1\rangle}\ar[r] & 
*+{\Phi_1: \langle f, s_0\rangle \$ \langle hg, s_1\rangle}\ar[r] & 
*+{\Phi_2: \langle gf, s_0\rangle \$ \langle hg, s_1\rangle}\ar[r] & *+{\Phi_3: \langle hg, s_1\rangle \langle gf, s_0\rangle \$\Lambda} 
}
\] 
Suffix $\langle f, s_0\rangle$ of $\Phi_i$ is permanently stable w.r.t. $[2,3]$, but is not permanently stable w.r.t. the position $0$ (i.\,e., w.r.t. the whole trace) because in the word $\Phi_1$ it is preceded by the empty prefix in the visible layer\footnote{Thus, the letter $f$ in $\Phi_1$ is rewritten by some rule to $gf$ in $\Phi_2$.}. Suffix $\langle g, s_1\rangle$ of $\Phi_1$ is not permanently stable w.r.t. $[1,3]$, because in the words $\Phi_1$, $\Phi_2$ it occurs in the invisible layer.
\end{Example}

\begin{Definition}
Let $\mathbf{G}=\langle \Upsilon, \mathbf{S},R, \mathfrak{F}^v_{K_1, K_2},\Gamma_0\$\Delta_0\rangle$ be a multi-layer prefix grammar. Given two~$\mathbf{G}$-words $\Xi_i = \Gamma_i\$\Delta_i$, $\Xi_j= \Gamma_j\$\Delta_j$ in a trace $\{\Gamma_k\$\Delta_k\}$, we say that the words form \emph{a Turchin pair}  (denoted as $\Xi_i\preceq \Xi_j$) if $\Gamma_i = \Phi\Theta_0$, $\Gamma_j = \Phi'\Psi\Theta_0$, $\Phi$ is equal to $\Phi'$ as a plain word (up to the layer labels) and the suffix $\Theta_0$ is permanently stable w.r.t. segment $[i,j]$.
\end{Definition}

Thus, if we do not take into account the invisible layer and layer labels, then the definition of the~Turchin pair for the traces generated by the multi-layer prefix grammars repeats the definition for the plain prefix grammars given in Section~\ref{Section:Preliminaries}.

\begin{Theorem}[Strengthened Turchin's Theorem]
\label{prop:infinite}
Let $\mathbf{G}=\langle \Upsilon, \mathbf{S},\mathbf{R}, \mathfrak{F}^v_{K_1, K_2},\Gamma_0\$\Delta_0\rangle$ be a multi-layer prefix grammar. Every infinite trace generated by grammar $\mathbf{G}$ contains an infinite subsequence $\{\Gamma_{k}\$\Delta_{k}\}$ such that for every $\Gamma_{{k_1}}\$\Delta_{{k_1}}$, $\Gamma_{{k_2}}\$\Delta_{{k_2}}$, $k_1<k_2$ implies  $\Gamma_{{k_1}}\preceq\Gamma_{{k_2}}$.
\end{Theorem}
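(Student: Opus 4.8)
The plan is to reduce the multi-layer case to a combinatorial statement about the visible layers, and then run a König/pigeonhole argument to extract the infinite Turchin chain. First I would observe that along an infinite trace the visible layer $\Gamma_k$ evolves essentially like a word in a prefix grammar: a simple rule rewrites the topmost letter (for alphabetic grammars) or a bounded prefix into a new bounded prefix of descendant-labelled letters, while a pop rule appends a popped block $\Psi\langle s_j\rangle$ from the invisible layer. The key preliminary is a "stabilization" lemma: in any infinite trace there is a position $i_0$ and a nonempty suffix $\Theta_0$ of $\Gamma_{i_0}$ that is permanently stable w.r.t. $i_0$ (never touched again). This follows because the visible layer cannot shrink to $\Lambda$ infinitely often unless the trace terminates (a halting configuration), so after some point the length of the visible layer stays above any fixed bound $N$; by Proposition~\ref{Prop:VisOrder}(3) the relative order of surviving derivatives is fixed, so the "deepest" letters that are never rewritten form a stable suffix. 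The bound $N$ here is the maximal number of visible letters any rule rewrites, so the definition of permanent stability is met.

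The second step is to iterate this. Given a permanently stable suffix $\Theta_0$ from position $i_0$ onward, restrict attention to the prefixes $\Phi_k$ with $\Gamma_k = \Phi_k\Theta_0$ for $k \ge i_0$. These prefixes themselves form (the visible part of) an infinite trace of a multi-layer prefix grammar on a smaller "arena" — the letters of $\Theta_0$ act as an inert context and can be deleted, and the invisible layer above $\Theta_0$ is carried along unchanged. So I can apply the stabilization lemma again inside this sub-trace to peel off a further stable suffix, and so on. Each peeling fixes more and more of the visible layer permanently; since at each stage the prefix is nonempty infinitely often (otherwise we would have found a Turchin pair already, or the trace halts), this process produces an increasing sequence of positions $i_0 < i_1 < i_2 < \dots$ together with a chain of suffixes.

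The third step is to actually exhibit the Turchin pairs. At stage $n$ we have positions where the visible layer is $\Phi\Theta_n$ with $\Theta_n$ permanently stable; I want two positions $k_1 < k_2$ in this family with the plain word of the "new" prefix at $k_1$ equal to a prefix of the plain word at $k_2$, i.e. $\Gamma_{k_1} = \Phi\Theta_n$ and $\Gamma_{k_2} = \Phi'\Psi\Theta_n$ with $\Phi = \Phi'$ as plain words. Here I would use that the visible prefix, read as a plain word, grows by bounded increments (each rule changes at most $N$ letters and appends a popped block whose plain content is determined by invisible-layer data), combined with a pigeonhole over the finitely many possible "tops" of the visible layer: infinitely often the top letter(s) of the non-stable prefix repeat, and at the first such repetition the earlier configuration is a prefix-plus-stable-suffix of the later one modulo labels, which is exactly $\preceq$. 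Doing this at every stage $n$ and diagonalizing (taking one such pair from each stage, arranged so they nest) yields the required infinite subsequence totally ordered by $\preceq$.

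I expect the main obstacle to be the bookkeeping in the iterated peeling: one must verify that after fixing $\Theta_0,\dots,\Theta_{n}$ as inert context, what remains genuinely is (the trace of) a multi-layer prefix grammar of the same shape — in particular that pop rules acting "below" the stable part still behave correctly and that the invisible layers attached to the stable suffixes never re-enter the visible layer (which is where Proposition~\ref{Prop:VisOrder}(2) and the label-ordering Proposition~\ref{Prop:VisOrder}(1) do the real work). The subtlety special to call-by-name — absent in the call-by-value prefix-grammar proof of \cite{ANNTur} — is precisely that pop rules can move material from the invisible layer into the visible layer, so one must argue that a permanently stable suffix, once established, shields everything invisible that sits "inside" it; establishing that shielding rigorously, and showing the peeling terminates in the sense that the non-stable prefix is controlled by finitely many states, is the technical heart of the argument.
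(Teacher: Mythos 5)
Your first step---the ``stabilization lemma'' asserting that every infinite trace eventually acquires a nonempty permanently stable suffix---is false, and the inference you use to justify it is a non sequitur: from ``the visible layer cannot become $\Lambda$ without halting'' it does not follow that ``after some point the length of the visible layer stays above any fixed bound $N$.'' The visible layer may oscillate forever within a bounded range (e.g.\ an alphabetic grammar with a doubling rule $\langle a,s_0\rangle\to\langle aa,s_0\rangle$ and an erasing rule $\langle a,s_0\rangle\to\Lambda$, whose trace alternates between $a$ and $aa$); then the prefix preceding any candidate suffix has length $<N$ infinitely often, so by the definition of permanent stability \emph{no} nonempty suffix is ever permanently stable, and your iterated peeling never starts. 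The paper's proof is organized precisely around this dichotomy: \emph{either} the lengths of permanently stable suffixes are bounded by some $|\Theta_i|$, in which case infinitely many visible layers have length at most $|\Theta_i|+N$ and a pigeonhole over the finitely many plain words of bounded length yields an infinite set of positions whose visible layers coincide as plain words (an infinite Turchin chain with empty middle part); \emph{or} they are unbounded, which is the only case your argument actually addresses.

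In the unbounded case your step 3 is close in spirit to the paper's: for each ever-longer stable suffix $\widehat{\Phi}_{i_n}$ the paper looks at the moment when the letter immediately preceding it is erased, observes that at that moment the visible layer is $\Psi_{i_n}\widehat{\Phi}_{i_n}$ with $|\Psi_{i_n}|\le N$, and applies a single pigeonhole to the boundedly many plain words $\Psi_{i_n}$ to extract one infinite subsequence that is \emph{pairwise} a Turchin chain. Your closing move---``taking one such pair from each stage and diagonalizing''---does not obviously produce this: one Turchin pair per stage gives an infinite set of pairs, not an infinite subsequence every two elements of which are comparable, and the theorem (and the corollary on composing with another well binary relation) needs the latter. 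The iterated re-packaging of the residual prefix as a fresh multi-layer grammar is also unnecessary overhead that the paper's direct argument avoids.
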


\begin{proof}

The idea of the proof is borrowed from the original V.~Turchin's work \cite{Turchin88}. 

Let $N$ be the maximal number of the letters in the visible layer that can be changed by rewriting rules from $\mathbf{R}$. We consider the following two cases.

Let $\{\Gamma_i\$\Delta_i\}$ be an infinite trace under the theorem conditions. If some word $\Gamma_i\$\Delta_i$ contains a suffix $\Theta_i$ that is permanently stable w.r.t. $i$ in the trace, and for all $j>i$ no word $\Gamma_j\$\Delta_j$ contains a permanently stable (w.r.t. $j$) suffix being longer than $\Theta_i$, then there are infinitely many words $\Phi\$\Psi$ in the trace such that $|\Phi|\leq|\Theta_i|+N$. Thus some word $\Phi$ repeats itself as a plain word in the visible parts infinitely many times. The infinite subsequence of the words having $\Phi$ as the plain word in the visible part is a subsequence of $\{\Gamma_i\$\Delta_i\}$, such that every two words of the subsequence form a Turchin pair.

Let $\{\Gamma_i\$\Delta_i\}$ be an infinite trace with no upper bound on the~permanently stable suffixes' length. So there is an infinite sequence of words $\{\Gamma_{i_n}\$\Delta_{i_n}\}$ such that the visible part $\Gamma_{i_n}$ contains the suffix $\widehat{\Phi}_{i_n}$ that is permanently stable w.r.t. $i_n$, but not w.r.t. $i_n-1$. Namely, such words $\Gamma_{i_n}\$\Delta_{i_n}$ are the words where the first letter of $\widehat{\Phi}_{i_n}$ is generated in the visible part. The letter of $\Gamma_{i_n}$ preceding $\widehat{\Phi}_{i_n}$ does not belong to the permanently stable suffix, so it is erased somewhere further in the trace. The $\mathbf{G}$-word in which it is erased looks as $\Psi_{i_n}\widehat{\Phi}_{i_n}\$\Delta_{i_{k}}$, $|\Psi_{i_n}|\leq N$. 

Since $|\Psi_{i_n}|$ is bounded, there exists at least one plain word $\Psi$ such that the sequence $\{\Psi_{i_n}\widehat{\Phi}_{i_n}\$\Delta_{i_k}\}$ contains infinitely many $\mathbf{G}$-words with the prefix $\Psi_{i_n}$ equal to $\Psi$ as a plain word. Let the subsequence of such words be $\{\Gamma_{i_l}\$\Delta_{i_l}\}$. For every ${i_{l_1}}<{i_{l_2}}$, $\Gamma_{i{l_1}}=\Psi_{i{l_1}}\widehat{\Phi}_{i_{l_1}}$, $\Gamma_{i_{l_2}}=\Psi_{i_{l_2}}\widehat{\Phi}_{i_{l_2}}=\Psi_{i_{l_2}}\Xi\widehat{\Phi}_{i_{l_1}}$, the suffix $\widehat{\Phi}_{i_{l_1}}$ is unchanged in $\Gamma_{i_{l_2}}$, and $\Psi_{i_{l_1}}$ and $\Psi_{i_{l_2}}$ coincide as the plain words. Thus, the sequence $\{\Gamma_{i_l}\$\Delta_{i_l}\}$ satisfies the statement of the theorem.
\end{proof}

Theorem \ref{prop:infinite} implies the following corollary.

\begin{Prop}
A composition of Turchin's relation and an arbitrary well binary relation $R$ is a well binary relation with respect to the set of the traces generated by the multi-layer prefix grammars.
\end{Prop}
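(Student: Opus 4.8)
The plan is to obtain the statement as an immediate corollary of the Strengthened Turchin's Theorem (Theorem~\ref{prop:infinite}) combined with Definition~\ref{definition:awo}. The crucial point is that Theorem~\ref{prop:infinite} delivers not merely a single Turchin pair inside every infinite trace, but an \emph{infinite} $\preceq$-chain; this extra strength is exactly what lets us intersect Turchin's relation with a further relation without destroying well-binariness, even though in general a subrelation of a well binary relation need not be well binary.

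First I would make the notion of ``composition'' precise: it relates $\mathbf{G}$-words $\Xi_i$, $\Xi_j$ in a trace (with $i<j$) when both $\Xi_i\preceq\Xi_j$ holds and the terms --- the static call-stack configurations that $\preceq$ ignores --- attached to $\Xi_i$ and $\Xi_j$ are related by $R$. Hence a trace is \emph{bad} for the composition iff for all $i<j$ at least one of the two conjuncts fails, and it suffices to show every bad sequence is finite.

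Then I would argue by contradiction. Assume some infinite trace $\{\Gamma_k\$\Delta_k\}$ of a multi-layer prefix grammar is bad for the composition. Theorem~\ref{prop:infinite} gives an infinite subsequence $\{\Gamma_{k_n}\$\Delta_{k_n}\}$ with $\Gamma_{k_m}\preceq\Gamma_{k_n}$ for all $m<n$, so along this subsequence the first conjunct never fails. Now consider the sequence of terms carried by this subsequence. Since $R$ is well binary with respect to arbitrary term sequences, this infinite sequence of terms is not bad for $R$, so there are $m<n$ with the term of $\Gamma_{k_m}\$\Delta_{k_m}$ related by $R$ to the term of $\Gamma_{k_n}\$\Delta_{k_n}$. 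For that pair both conjuncts hold, so $(\Xi_{k_m},\Xi_{k_n})$ lies in the composition while $k_m<k_n$ are positions of the original trace --- contradicting its badness. Therefore every sequence bad for the composition is finite, which is precisely what it means for the composition to be well binary with respect to the traces of multi-layer prefix grammars.

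I expect no genuine obstacle once Theorem~\ref{prop:infinite} is available; the only care needed is definitional bookkeeping --- pinning down ``composition'' as the conjunction of $\preceq$ with $R$ read off the ignored terms, checking that a subsequence of a trace is still a legitimate input for the well-binariness of $R$ (it is, because $R$ is well binary on \emph{arbitrary} term sequences, not only on traces), and reporting the good $R$-pair found inside the subsequence back at its original trace indices. One could also note that if $R$ were only well binary with respect to some restricted class $S$ of sequences, the same argument would still go through provided term sequences obtained by restricting traces to subsequences remain in $S$.
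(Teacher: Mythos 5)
Your proposal is correct and follows essentially the same route as the paper: apply Theorem~\ref{prop:infinite} to extract an infinite $\preceq$-chain from any infinite trace, then invoke the well-binariness of $R$ on that subsequence to find a pair satisfying both relations. Your additional bookkeeping (making "composition" precise as a conjunction, and noting that $R$ must be well binary on arbitrary sequences for the subsequence step to go through) only spells out what the paper's two-sentence proof leaves implicit.
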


\begin{proof}
Every infinite trace generated by a multi-layer prefix grammar contains an infinite subsequence, every two words of which form a Turchin pair. Due to the well-binariness of relation $R$, this subsequence also contains two words $\Gamma$ and $\Delta$ such that $\Gamma$ precedes $\Delta$ and $(\Gamma,\Delta)\in R$.
\end{proof}

\begin{Example}
\label{Example:log2progcomposite}

We turn back to the program of Example~\ref{Example:log2prog}. Example~\ref{Example:log2proggen} shows that generalization w.r.t. Turchin's relation produces a better residual program than generalization w.r.t. the homeomorphic embedding. Let us consider the composition of these two relations. The result of the unfolding and the generalization is shown in Figure~\ref{fig:treegenTurHom} given in Appendix.

The residual program extracted from the graph after the generalization looks as follows.

$$\begin{array}{|l|l|}\hline \textrm{Residual program generated by} \\ 
\textrm{the graph of Figure~\ref{fig:treegenTurHom}}\\\hline
{f_1(0)=0;}\\ {f_1(1)=1;}\\
{f_1(2)=1;}\\ {f_1(x+1+1+1)=f_1(g_1(x)+1);}\\\\
{g_1(0)=0;}\\ {g_1(1)=0;}\\
{g_1(x+1+1)=g_1(x)+1;}\\\hline
\end{array}$$

\noindent This residual program is more efficient than the two residual programs given in Example~\ref{Example:log2proggen}, and contains less rules.

Why the composition helps to construct a good generalization in this case? Let us consider the configurations in Turchin's relation along the trace. The first two such configurations are $f(g(g(x_1)+1))+1$ and $f(h(g(x_1)))+1$. The innermost call of the first configuration does not appear the call-stack, so from the point of view of Turchin's relation these two configurations look as $f(g(z_1))$ and $f(h(g(z_2)))$ (and the call of $f$ is not changed in the corresponding segment of the trace). So, if one uses only Turchin's relation, the two configurations are to be generalized. The homeomorphic embedding relation does not ignore the passive parts of the configurations. But if we use only the homeomorphic embedding relation, we will meet problems as described in Example~\ref{Example:log2proggen}. Thus, using the composition allows a supercompiler to generalize configurations $\theta_1$ and $\theta_2$ only if both the call stacks of $\theta_1$ and $\theta_2$ are similar and the static data in $\theta_1$ and $\theta_2$ are also similar, which results in more accurate generalizations.
\end{Example}

\section{Conclusion}

Turchin's relation for call-by-name computations is a strong and consistent branch termination criterion. Every infinite trace generated by unfolding a program in the language with the call-by-name semantics contains two elements, whose call-stack configurations form a Turchin pair. Neither the homeomorphic embedding can replace Turchin's relation nor Turchin's relation can be considered as a simplification of the homeomorphic embedding in the case of the normal-order reduction\footnote{If all the configurations in the computation tree contained no branching (i.e., were consisting only of unary function calls and constructors), Turchin's relation could be considered as a ``one-gap version'' of the homeomorphic embedding for the call-stacks, and could be replaced by an annotated version of the homeomorphic embedding as shown in \cite{ANNTur}. However, for terms having the tree structure (even with only one branching node), the ``one-gap'' (or even ``n-gap'') relation is not well-binary (that is also shown in \cite{ANNTur}). So, considering the call-stack configurations as words is somewhat essential for making Turchin's relation well-binary.}. Turchin's relation can be used together with the homeomorphic embedding relation without the loss of well-binariness, and can be used not only for deciding when to terminate the computation path but also for deciding how to generalize the configurations. 

The alphabetic multi-layer grammars, describing the function call stack behaviour, are able to generate languages with very long words, but it seems to us they are not able to generate languages with words having a complex structure. It would be interesting to find some practical problems, which can be solved with the power of Turchin's relation (or homeomorphic embedding relation) on the call-stack configurations for call-by-name computations.  

\section*{Acknowledgements}
I would like to thank A.\,P.~Nemytykh for many fruitful advices and help in improving the paper, and the anonymous referees for the useful feedback.

%Hopcroft (про пробелы у МТ?)
%Inria (про выч.силу МТ с 1 состоянием)
%миссионеры, Климов с кэш-протоколами, кейс-стади протокола Немытых, работа Ключникова по налл
%

\label{sect:bib}

\nocite{*}
\bibliographystyle{eptcs}
\bibliography{vpt2016}

\begin{thebibliography}{10}
\providecommand{\bibitemdeclare}[2]{}
\providecommand{\surnamestart}{}
\providecommand{\surnameend}{}
\providecommand{\urlprefix}{Available at }
\providecommand{\url}[1]{\texttt{#1}}
\providecommand{\href}[2]{\texttt{#2}}
\providecommand{\urlalt}[2]{\href{#1}{#2}}
\providecommand{\doi}[1]{doi:\urlalt{http://dx.doi.org/#1}{#1}}
\providecommand{\bibinfo}[2]{#2}

\bibitemdeclare{article}{Albert}
\bibitem{Albert}
\bibinfo{author}{E.~\surnamestart Albert\surnameend},
  \bibinfo{author}{J.~\surnamestart Gallagher\surnameend},
  \bibinfo{author}{M.~\surnamestart Gomes-Zamalla\surnameend} \&
  \bibinfo{author}{G.~\surnamestart Puebla\surnameend} (\bibinfo{year}{2009}):
  \emph{\bibinfo{title}{Type-based Homeomorphic Embedding for Online
  Termination}}.
\newblock {\sl \bibinfo{journal}{Journal of Information Processing Letters}}
  \bibinfo{volume}{109}(\bibinfo{number}{15}), pp. \bibinfo{pages}{879--886},
  \doi{10.1016/j.ipl.2009.04.016}.

\bibitemdeclare{article}{Caucal}
\bibitem{Caucal}
\bibinfo{author}{D.~\surnamestart Caucal\surnameend} (\bibinfo{year}{1992}):
  \emph{\bibinfo{title}{On the regular structure of prefix rewriting}}.
\newblock {\sl \bibinfo{journal}{Theoretical Computer Science}}
  \bibinfo{volume}{106}, pp. \bibinfo{pages}{61--86},
  \doi{10.1016/0304-3975(92)90278-N}.

\bibitemdeclare{article}{Greibach}
\bibitem{Greibach}
\bibinfo{author}{S.~\surnamestart Greibach\surnameend} (\bibinfo{year}{1965}):
  \emph{\bibinfo{title}{A New Normal-Form Theorem for Context-Free Phrase
  Structure Grammars}}.
\newblock {\sl \bibinfo{journal}{Journal of the ACM}}
  \bibinfo{volume}{12}(\bibinfo{number}{1}), \doi{10.1145/321250.321254}.

\bibitemdeclare{inbook}{Hamilton}
\bibitem{Hamilton}
\bibinfo{author}{G.~W. \surnamestart Hamilton\surnameend} \&
  \bibinfo{author}{N.~D. \surnamestart Jones\surnameend}
  (\bibinfo{year}{2012}): \emph{\bibinfo{title}{Distillation with labelled
  transition systems}}, pp. \bibinfo{pages}{15--24}.
\newblock \bibinfo{publisher}{IEEE Computer Society Press},
  \doi{10.1145/2103746.2103753}.

\bibitemdeclare{book}{HopcroftUlmanBook}
\bibitem{HopcroftUlmanBook}
\bibinfo{author}{J.~E. \surnamestart Hopcroft\surnameend} \&
  \bibinfo{author}{J.~D. \surnamestart Ullman\surnameend}
  (\bibinfo{year}{1979}): \emph{\bibinfo{title}{Introduction to Automata
  Theory, Languages, and Computation}}.
\newblock \bibinfo{publisher}{Addison-Wesley}.

\bibitemdeclare{misc}{Joshi}
\bibitem{Joshi}
\bibinfo{author}{A.K. \surnamestart Joshi\surnameend}, \bibinfo{author}{K.V.
  \surnamestart Shanker\surnameend} \& \bibinfo{author}{D.~\surnamestart
  Weir\surnameend} (\bibinfo{year}{1990}): \emph{\bibinfo{title}{The
  Convergence of Mildly Context-Sensitive Grammar Formalisms}}.
\newblock \bibinfo{howpublished}{Technical Report No. MS-CIS-90-01, University
  of Pennsylvania}.
\newblock \urlprefix\url{http://repository.upenn.edu/cis_reports/539}.

\bibitemdeclare{misc}{Kluch}
\bibitem{Kluch}
\bibinfo{author}{I.~\surnamestart Klyuchnikov\surnameend}
  (\bibinfo{year}{2010}): \emph{\bibinfo{title}{Inferring and Proving
  Properties of Functional Programs by Means of Supercompilation}}.
\newblock \bibinfo{howpublished}{Ph. D. Thesis (in Russian), Keldysh Institute
  of Applied Mathematics of RAS, Moscow}.
\newblock \urlprefix\url{http://keldysh.ru/council/1/klyuchnikov-diss.pdf}.

\bibitemdeclare{article}{Kruskal}
\bibitem{Kruskal}
\bibinfo{author}{J.B. \surnamestart Kruskal\surnameend} (\bibinfo{year}{1960}):
  \emph{\bibinfo{title}{Well-Quasi-Ordering, The Tree Theorem, and Vazsonyi's
  Conjecture}}.
\newblock {\sl \bibinfo{journal}{Transactions of the American Mathematical
  Society}} \bibinfo{volume}{95}, pp. \bibinfo{pages}{210--225},
  \doi{10.2307/1993287}.

\bibitemdeclare{inbook}{Leuschel}
\bibitem{Leuschel}
\bibinfo{author}{M.~\surnamestart Leuschel\surnameend} (\bibinfo{year}{2002}):
  \emph{\bibinfo{title}{Homeomorphic Embedding for Online Termination of
  Symbolic Methods}}, pp. \bibinfo{pages}{379--403}.
\newblock {\sl \bibinfo{series}{Lecture Notes in Computer Science}}
  \bibinfo{volume}{2566}, \bibinfo{publisher}{IEEE Computer Society Press},
  \doi{10.1007/3-540-36377-7\_17}.

\bibitemdeclare{article}{NW65}
\bibitem{NW65}
\bibinfo{author}{C.~St. J.~A. \surnamestart Nash-Williams\surnameend}
  (\bibinfo{year}{1965}): \emph{\bibinfo{title}{On Well-quasi-ordering Infinite
  Trees}}.
\newblock {\sl \bibinfo{journal}{Proceedings of Cambridge Philosophical
  Society}} \bibinfo{volume}{61}, pp. \bibinfo{pages}{697--720},
  \doi{10.1017/s0305004100039062}.

\bibitemdeclare{book}{NemytykhBook}
\bibitem{NemytykhBook}
\bibinfo{author}{A.~P. \surnamestart Nemytykh\surnameend}
  (\bibinfo{year}{2007}): \emph{\bibinfo{title}{The Supercompiler Scp4: General
  Structure}}.
\newblock \bibinfo{publisher}{URSS, Moscow}.
\newblock \bibinfo{note}{(In Russian)}.

\bibitemdeclare{inproceedings}{ANNTur}
\bibitem{ANNTur}
\bibinfo{author}{A.~\surnamestart Nepeivoda\surnameend} (\bibinfo{year}{2014}):
  \emph{\bibinfo{title}{Turchin's Relation and Subsequence Relation in Loop
  Approximation}}.
\newblock In: {\sl \bibinfo{booktitle}{PSI 2014. Ershov Informatics Conference.
  Poster Session}}, \bibinfo{volume}{23}, \bibinfo{publisher}{EPiC Series,
  EasyChair}, pp. \bibinfo{pages}{30--42}.
\newblock
  \urlprefix\url{http://www.easychair.org/publications/download/Turchin_s_Relation_and_Subsequence_Relation_in_Loop_Approximation}.

\bibitemdeclare{article}{Puel}
\bibitem{Puel}
\bibinfo{author}{L.~\surnamestart Puel\surnameend} (\bibinfo{year}{1989}):
  \emph{\bibinfo{title}{Using Unavoidable Set of Trees to Generalize Kruskal's
  Theorem}}.
\newblock {\sl \bibinfo{journal}{Journal of Symbolic Computation}}
  \bibinfo{volume}{8}, pp. \bibinfo{pages}{335--382},
  \doi{10.1016/s0747-7171(89)80035-5}.

\bibitemdeclare{misc}{OneStateTM}
\bibitem{OneStateTM}
\bibinfo{author}{Y.~\surnamestart Saouter\surnameend} (\bibinfo{year}{1995}):
  \emph{\bibinfo{title}{Halting Problem for One-State Turing Machines}}.
\newblock \bibinfo{howpublished}{Research Report RR-2577, INRIA}.
\newblock \urlprefix\url{https://hal.inria.fr/inria-00074105}.

\bibitemdeclare{inproceedings}{Gluck95}
\bibitem{Gluck95}
\bibinfo{author}{M.~H. \surnamestart S\o{}rensen\surnameend} \&
  \bibinfo{author}{R.~\surnamestart Gl\"uck\surnameend} (\bibinfo{year}{1995}):
  \emph{\bibinfo{title}{An Algorithm of Generalization in Positive
  Supercompilation}}.
\newblock In: {\sl \bibinfo{booktitle}{Proceedings of ILPS'95, the
  International Logic Programming Symposium}}, \bibinfo{publisher}{MIT Press},
  pp. \bibinfo{pages}{465--479}.
\newblock
  \urlprefix\url{http://citeseerx.ist.psu.edu/viewdoc/summary?doi=10.1.1.49.1869}.

\bibitemdeclare{misc}{SorensenMsTh}
\bibitem{SorensenMsTh}
\bibinfo{author}{M.H. \surnamestart S\o{}rensen\surnameend}
  (\bibinfo{year}{1994}): \emph{\bibinfo{title}{Turchin's Supercompiler
  Revisited}}.
\newblock \bibinfo{howpublished}{Ms.Thesis, Department of Computer Science,
  University of Copenhagen}.
\newblock
  \urlprefix\url{http://citeseerx.ist.psu.edu/viewdoc/download?doi=10.1.1.50.8214&rep=rep1&type=pdf}.

\bibitemdeclare{article}{Jones}
\bibitem{Jones}
\bibinfo{author}{M.H. \surnamestart S\o{}rensen\surnameend},
  \bibinfo{author}{R.~\surnamestart Gl\"uck\surnameend} \&
  \bibinfo{author}{N.~D. \surnamestart Jones\surnameend}
  (\bibinfo{year}{1996}): \emph{\bibinfo{title}{A Positive Supercompiler}}.
\newblock {\sl \bibinfo{journal}{Journal of Functional Programming}}
  \bibinfo{volume}{6}, pp. \bibinfo{pages}{811--838},
  \doi{10.1017/s0956796800002008}.

\bibitemdeclare{article}{Tur86}
\bibitem{Tur86}
\bibinfo{author}{V.~F. \surnamestart Turchin\surnameend}
  (\bibinfo{year}{1986}): \emph{\bibinfo{title}{The Concept of a
  Supercompiler}}.
\newblock {\sl \bibinfo{journal}{ACM Transactions on Programming Languages and
  Systems}} \bibinfo{volume}{8}(\bibinfo{number}{3}), pp.
  \bibinfo{pages}{292--325}, \doi{10.1145/5956.5957}.

\bibitemdeclare{book}{TurRefal5}
\bibitem{TurRefal5}
\bibinfo{author}{V.~F. \surnamestart Turchin\surnameend}
  (\bibinfo{year}{1989}): \emph{\bibinfo{title}{Refal-5, Programming Guide and
  Reference Manual}}.
\newblock \bibinfo{publisher}{New England Publishing Co.},
  \bibinfo{address}{Holyoke, Massachusetts}.
\newblock \bibinfo{note}{Electronic
  version:\url{http://www.botik.ru/pub/local/scp/refal5/}}.

\bibitemdeclare{article}{Turchin88}
\bibitem{Turchin88}
\bibinfo{author}{V.F. \surnamestart Turchin\surnameend} (\bibinfo{year}{1988}):
  \emph{\bibinfo{title}{The algorithm of generalization in the supercompiler}}.
\newblock {\sl \bibinfo{journal}{Partial Evaluation and Mixed Computation}},
  pp. \bibinfo{pages}{341--353}.

\end{thebibliography}

\vspace{+15pt}
\section*{Appendix}
\label{section:Appendix}

\vspace{+10pt}
\subsection{Proof of Theorem \ref{Theorem:CompComp} (Subsection~\ref{subsection:forMLPG})}
\label{proof:thcompcomp}

\begin{proof}

It is sufficient to prove that, given an input word, every Turing machine on the input can be emulated by a multi-layer prefix grammar treating the input word as its initial word. 

\vspace{+5pt}
Consider an arbitrary Turing machine $\langle Q, \Upsilon_A, b, \sigma, q_0, F\rangle$, where $Q$ is a finite state alphabet, $\Upsilon_A$ is a finite tape alphabet, $b$ is the blank symbol, $q_0\in Q$ is the initial state, $F\subset Q$ is a set of the final states, $\sigma\subset Q\times \Upsilon_A\times Q\times\Upsilon_A\times\{L, R\}$ is a set of transition rules, and an input $I\in \Upsilon_A^*$. Let us introduce a~multi-layer prefix grammar with the alphabet $\Upsilon=\Upsilon_A\cup Q^R\cup Q^L\cup\{Blank^L, Blank^R, b\}$, where $Q^R$ and $Q^L$ are the state alphabet $Q$ marked by the superscripts meaning ``a state after moving to the right cell'' and ``a state after moving to the left cell'' correspondingly. $Blank^R$ and $Blank^L$ are special ``end-marks'' referring to the blanks on the tape after the rightmost and leftmost cells reached by the machine head in the~computation. All blanks on the tape between them are denoted in the model grammar by usual $b$ symbols. 

\vspace{+5pt}
Let us assume in this proof that all labels in the model grammar are of the form $s_i$, where $i$ is a rational number ($i\in\mathbb{Q}$), and $s_i\triangleleft s_j$ iff $i<j$ ($i,j\in \mathbb{Q}$).

\vspace{+5pt}
The initial word in the model grammar is

\[\Gamma_0\$\Delta_0=\langle q_0^R, s_0 \rangle I\langle s_0 \rangle \langle Blank^R, s_0 \rangle\$\langle Blank^L, s_1 \rangle\]

In order to emulate a rule $(q_1, a_1)\rightarrow (q_2, a_2, R)\in \sigma$, we use one of the following two rewrite schemes, where $x$ is a letter variable:

\[\langle q_1^R, s_i\rangle \langle a_1, s_j\rangle \Phi\$ \Psi \langle x, s_k\rangle\rightarrow \langle q_2^R, s_j\rangle \Phi\$ \Psi \langle x, s_k\rangle \langle a_2, s_{\frac{j+k}{2}}\rangle\]  

\[\langle a_1, s_j\rangle \langle q_1^L, s_i\rangle  \Phi\$ \Psi \langle x, s_k\rangle\rightarrow \langle q_2^R, s_i\rangle \Phi\$ \Psi \langle x, s_k\rangle \langle a_2, s_{\frac{i+k}{2}}\rangle\] 

\vspace{+5pt}
In order to emulate a rule $(q_1, a_1)\rightarrow (q_2, a_2, L)$, we use one of the following two schemes ($s_k=\child(s_i)$):

\[\langle q_1^R, s_i\rangle \langle a_1, s_j\rangle \Phi\$ \Psi \langle x, s_k\rangle \rightarrow \langle x, s_k\rangle \langle q_2^L, s_j\rangle \langle a_2, s_j\rangle\Phi\$ \Psi\]

\[\langle a_1, s_j\rangle \langle q_1^L, s_i\rangle  \Phi\$ \Psi \langle x, s_k\rangle\rightarrow  \langle x, s_k\rangle \langle q_2^L, s_i\rangle \langle a_2, s_i\rangle\Phi\$ \Psi\]

If $q_2\in F$, we put the corresponding rule scheme into the set of the halting rules.
\vspace{10pt}

So we model the tape part to the right of the machine head by the visible layer, and the tape part to the left of the machine head by the invisible layer. The letters from $Q^R\cup Q^L$ can only appear in the visible layer, and what is more, any~letter from $Q^R$ may be only the first letter of the visible layer, and any letter from $Q^L$ --- the~second letter of the visible layer.
\end{proof}

\vspace{+10pt}
\subsection{Proof of Proposition \ref{Prop:CompAlphabeticMLPG} (Subsection~\ref{subsection:forMLPG})}
\label{proof:compalph}

\begin{proof}
First, we prove that the described class of the multi-layer prefix grammars is not weaker than the class of the context-free grammars and the class of the one-state Turing machines.

\vspace{+5pt}
For the one-state Turing machines, the corresponding model is constructed in the proof of Theorem~\ref{Theorem:CompComp}. 

\vspace{+5pt}
Given a context-free language, we consider its generating context-free grammar $\mathbf{C}$ in the Greibach normal form \cite{Greibach}. Let the set of non-terminals of $\mathbf{C}$ be $Q$ and the set of terminals be $T$. We construct a multi-layer grammar with the alphabet $\Upsilon=T\cup Q\cup\{\pop\}$. The initial word is $\langle S, s_0 \rangle \langle \pop, s_0\rangle \$\Lambda$ where $S$ is the initial symbol of $\mathbf{C}$. 

\vspace{+5pt}
For every rule $q_1\rightarrow u q_2 q_3$, $q_i\in Q$, $u\in T$ of the context-free grammar $\mathbf{C}$, we construct the~rewriting rule 
\[\langle q_1, s_0\rangle\Phi\$\Psi\rightarrow \langle q_2 q_3, s_0\rangle\Phi\$\Psi \langle u, s_1\rangle\] 

\vspace{+5pt}
For a rule $q_1\rightarrow \Lambda$ we construct the rule \[\langle q_1, s_0\rangle \Phi\$\Psi\rightarrow \Phi\$\Psi\] Finally, we add the halting rule $R_{\nil}$ \[\langle\pop, s_0\rangle\Phi\$\Psi \rightarrow \Psi\langle s_1\rangle\Phi\$\Psi\langle\ovl{$s_1$}\rangle\]

Because all invisible letters are labelled by $s_1$ and $\pop$ can only be generated in the initial word, rule $R_{\nil}$ actually looks as $\langle \pop, s_0\rangle\$\Psi\langle s_1 \rangle \rightarrow \Psi\langle s_1\rangle\$\Lambda$. 

\vspace{+5pt}
Since the alphabetic multi-layer grammars generate all context-free languages and all languages generated by one-state Turing machines, they are stronger than the one-state Turing machines (some regular languages cannot be generated by such a machine \cite{OneStateTM}).

\vspace{+10pt}
In order to prove that the alphabetic multi-layer prefix grammars are stronger than the context-free grammars, it is sufficient to show that the language $\{b^{2^n}|n\in\mathbb{N}\}$ can be generated by such a~grammar (such a language cannot be generated by any context-free grammar \cite{HopcroftUlmanBook}). We assume that $s_{\Gamma}\triangleleft s_{\Delta}$ iff the word $\Gamma$ is a prefix of  the word $\Delta$. Otherwise, $s_{\Gamma}$ and $s_{\Delta}$ are incomparable.

\vspace{+5pt}
Let the initial word of the grammar $\mathbf{G}_{\textrm{ExpLang}}$ be $\langle a, s_0\rangle \$ \langle bb, s_{01}\rangle$. The set of rewriting rules of $\mathbf{G}_{\textrm{ExpLang}}$ is as follows.

\vspace{+10pt}
\begin{tabular}{lll}
$R^{[1]}: \langle a, s_0\rangle \Phi\$\Psi\rightarrow \Psi\langle \child(s_0)\rangle\langle a, s_0\rangle\Phi\$\Psi\langle \ovl{$\child(s_0)$}\rangle $ \\
$R^{[2]}: \langle a, s_0\rangle\$\Psi \rightarrow \Psi\langle \child(s_0)\rangle \$\Psi\langle \ovl{$\child(s_0)$}\rangle $ \\
$R^{[3]}: \langle b, s_i\rangle \Phi\$\Psi \rightarrow \Phi\$\App^{s_{i1}}[bb](\Psi)$ \\
\end{tabular}

\vspace{+5pt}
\noindent Rule $R^{[2]}$ is the halting rule. 

\vspace{+15pt}
We consider the tree of the possible traces generated by grammar $\mathbf{G}_{\textrm{ExpLang}}$ (its fragment is shown in Figure~\ref{figure::explanguage}). All the letters in the invisible parts of the words in the trace have a single label. Thus, when a pop rule is applied to a word in the trace, it pops all the letters from the invisible layer.

\vspace{+10pt}
\begin{figure}[h!]
\label{figure::explanguage}

\UseAllTwocells
\[
%\entrymodifiers={++[o][F]}
\xymatrix @-4mm {
&*+{\langle a, s_0\rangle \$ \langle bb, s_{01}\rangle}
\ar[dr]^{R^{[1]}}\ar[dl]_{R^{[2]}}
\\
*+{\langle bb, s_{01}\rangle \$\Lambda} && *+{\langle bb, s_{01}\rangle \langle a, s_0\rangle \$ \Lambda}\ar[d]^{R^{[3]}}
\\
&&*+{\langle b, s_{01}\rangle \langle a, s_0\rangle \$ \langle bb, s_{011}\rangle}\ar[d]^{R^{[3]}}
\\
&&*+{\langle a, s_0\rangle \$ \langle bbbb, s_{011}\rangle}\ar[dl]_{R^{[2]}}\ar[d]^{R^{[1]}}
\\
&*+{\langle bbbb, s_{011}\rangle \$\Lambda}&  {\dots}\ar[d]
\\
&&*+{\langle a, s_0\rangle \$ \langle b^8, s_{0111}\rangle }\ar[dr]^<(0.4){R^{[1]}}\ar[dl]_{R^{[2]}}
\\
&{\dots}&&{\dots}
}
\] 

\caption{A fragment of the tree of the possible traces generated by $\mathbf{G}_{\textrm{ExpLang}}$}
\end{figure}
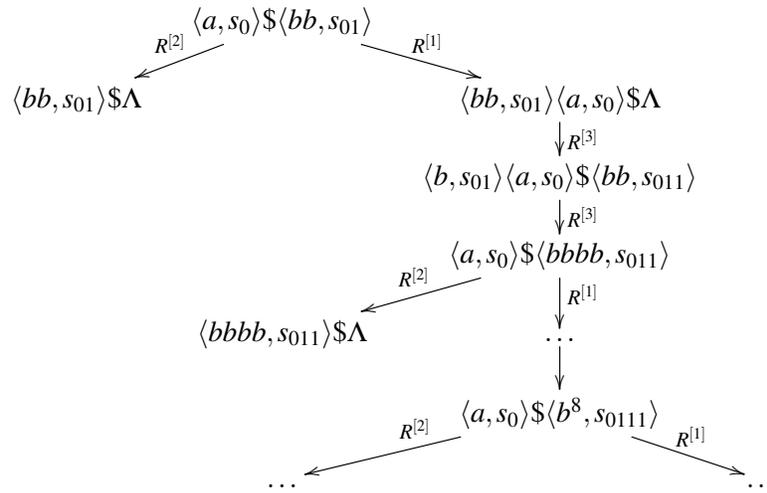

\vspace*{+25pt}
The resulting language $\{b^{2^n}|n\in\mathbb{N}\}$ is not even mildly context-sensitive\footnote{The class of mildly context-sensitive grammars is a special subclass of context-sensitive grammars that includes not only all context-free grammars, but also, e.\,g., tree adjoining grammars \cite{Joshi}.} \cite{HopcroftUlmanBook}. 
\end{proof}

\newpage
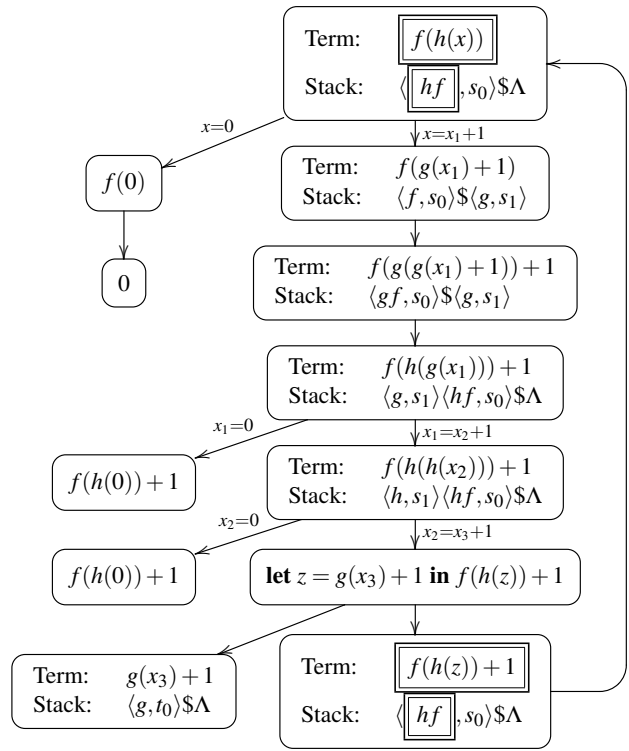
\begin{figure}[t]
\begin{center}
\footnotesize
\begin{tabular}{l}
 $$
\xymatrix @-5mm @C=0.6em{
&&*+[F-:<5pt>]{\begin{array}{ll} \textrm{Term: }&\doublebox{$f(h(x))$}\\\textrm{Stack: }&{\langle \doublebox{$hf$}, s_0\rangle \$\Lambda}\end{array}}\ar[dl]_<(0.35){x=0}\ar[d]^<(0.22){x=x_1+1}\\
&*++[F-:<5pt>]{f(0)}\ar[d] & *+[F-:<5pt>]{\begin{array}{ll} \textrm{Term: }&f(g(x_1)+1)\\\textrm{Stack: }&{\langle f, s_0\rangle \$ \langle g, s_1\rangle}\end{array}}\ar[d]\\
&*++[F-:<5pt>]{0}  & *+[F-:<5pt>]{\begin{array}{ll} \textrm{Term: }&f(g(g(x_1)+1))+1\\\textrm{Stack: }&{\langle gf, s_0\rangle \$ \langle g, s_1\rangle} \end{array}}\ar[d]\\
&& *+[F-:<5pt>]{\begin{array}{ll} \textrm{Term: }&f(h(g(x_1)))+1\\\textrm{Stack: }&{\langle g, s_1\rangle\langle hf, s_0\rangle \$ \Lambda}\end{array}}\ar[dll]_<(0.35){x_1=0}\ar[d]^{x_1=x_2+1}\\
*++[F-:<5pt>]{f(h(0))+1}& & *+[F-:<5pt>]{\begin{array}{ll} \textrm{Term: }&f(h(h(x_2)))+1\\\textrm{Stack: }&{\langle h, s_1\rangle\langle hf, s_0\rangle \$ \Lambda}\end{array}}\ar[dll]_<(0.35){x_2=0}\ar[d]^>(0.8){x_2=x_3+1}&*++++{}\\
*++[F-:<5pt>]{f(h(0))+1} && *+[F-:<5pt>]{\begin{array}{ll} \textrm{Term: }&\doublebox{$f(h(g(x_3)+1))+1$}\\\textrm{Stack: }&{\langle \doublebox{$hf$}, s_0\rangle \$ \langle g, s_2\rangle}\end{array}}\ar@{--}`r[ru]`[uuuuul][uuuuu] 
}
$$ \\ \\
\normalsize Generalization of ${f(h(g(x_3)+1))+1}$ and ${f(h(x))}$ is built
\end{tabular}

\vspace{7.2pt}
\begin{tabular}{l}
 $$
 %\SelectTips{cm}{}
\xymatrix @-5mm @C=0.6em{
&*+[F-:<5pt>]{\begin{array}{ll} \textrm{Term: }&\doublebox{$f(h(x))$}\\\textrm{Stack: }&{\langle \doublebox{$hf$}, s_0\rangle \$\Lambda}\end{array}}\ar[dl]_<(0.35){x=0}\ar[d]^<(0.22){x=x_1+1}\\
*++[F-:<5pt>]{f(0)}\ar[d] & *+[F-:<5pt>]{\begin{array}{ll} \textrm{Term: }&f(g(x_1)+1)\\\textrm{Stack: }&{\langle f, s_0\rangle \$ \langle g, s_1\rangle}\end{array}}\ar[d]\\
*++[F-:<5pt>]{0}  & *+[F-:<5pt>]{\begin{array}{ll} \textrm{Term: }&f(g(g(x_1)+1))+1\\\textrm{Stack: }&{\langle gf, s_0\rangle \$ \langle g, s_1\rangle} \end{array}}\ar[d]\\
 & *+[F-:<5pt>]{\begin{array}{ll} \textrm{Term: }&f(h(g(x_1)))+1\\\textrm{Stack: }&{\langle g, s_1\rangle\langle hf, s_0\rangle \$ \Lambda}\end{array}}\ar[dl]_<(0.35){x_1=0}\ar[d]^{x_1=x_2+1}\\
*++[F-:<5pt>]{f(h(0))+1} & *+[F-:<5pt>]{\begin{array}{ll} \textrm{Term: }&f(h(h(x_2)))+1\\\textrm{Stack: }&{\langle h, s_1\rangle\langle hf, s_0\rangle \$ \Lambda}\end{array}}\ar[dl]_<(0.3){x_2=0}\ar[d]^<(0.22){x_2=x_3+1}\\
*++[F-:<5pt>]{f(h(0))+1} & *++[F-:<5pt>]{\mathbf{let}\textrm{ }{z=g(x_3)+1}\textrm{ }\mathbf{in}\textrm{ }{f(h(z))+1}}\ar[dl]\ar[d]&*++++{}\\
*+[F-:<5pt>]{\begin{array}{ll} \textrm{Term: }&g(x_3)+1\\\textrm{Stack: }&{\langle g, t_0\rangle \$ \Lambda}\end{array}}&*+[F-:<5pt>]{\begin{array}{ll} \textrm{Term: }&\doublebox{$f(h(z))+1$}\\\textrm{Stack: }&{\langle \doublebox{$hf$}, s_0\rangle \$ \Lambda}\end{array}}\ar@{->}`r[ru]`[uuuuuul][uuuuuu]
}
$$ \\ \\
\normalsize A fragment of the graph after the generalization
\end{tabular}
\end{center}
\caption{Generalization using the composition of Turchin's relation and the homeomorphic embedding. Example~\ref{Example:log2progcomposite}}
\label{fig:treegenTurHom}
\end{figure}
\end{document}